\theoremstyle{plain}
\newtheorem{theorem}{Theorem}
\newtheorem{lemma}{Lemma}  
\newtheorem{corollary}{Corollary} 
\newtheorem{fact}{Fact}
\theoremstyle{definition}
\newtheorem{definition}{Definition}
\newcommand{\disth}{\mathrm{dist}_H}
\newcommand{\Lap}{\mathrm{Lap}}
\newcommand{\thresh}{\mathrm{Thresh}}
\newcommand{\rev}[1]{#1^{\mathrm{rev}}}
\newcommand{\alg}{\mathrm{Alg}}
\newcommand{\Out}{\mathrm{Out}}
\newcommand{\range}{\mathrm{range}}
\title{Differentially Private Approximate Pattern Matching}
\date{}
\begin{document}
\author{Teresa Anna Steiner\footnote{DTU Compute, Technical University of Denmark, Kongens Lyngby, Denmark. Email:\url{terst@dtu.dk}}}
\maketitle
\begin{abstract}
Differential privacy is the de-facto privacy standard in data analysis and widely researched in various application areas. On the other hand, analyzing sequences, or \emph{strings}, is essential to many modern data analysis tasks, and those data often include highly sensitive personal data. While the problem of
sanitizing sequential data to protect privacy has received growing attention, there is a surprising lack of theoretical studies of algorithms analyzing sequential data that preserve differential privacy while giving \emph{provable guarantees on the accuracy} of such an algorithm. The goal of this paper is to initiate such a study.

Specifically, in this paper, we consider the $k$-approximate pattern matching problem under differential privacy, where the goal is to report or count all substrings of a given string $S$ which have a Hamming distance at most $k$ to a pattern $P$, or decide whether such a substring exists. In our definition of privacy, \emph{individual positions of the string $S$} are protected. To be able to answer queries under differential privacy, we allow some slack on $k$, i.e. we allow reporting or counting substrings of $S$ with a distance at most $(1+\gamma)k+\alpha$ to $P$, for a multiplicative error $\gamma$ and an additive error $\alpha$. We analyze which values of $\alpha$ and $\gamma$ are necessary or sufficient to solve the $k$-approximate pattern matching problem while satisfying $\epsilon$-differential privacy. Let $n$ denote the length of $S$. We give
\begin{itemize}
    \item an $\epsilon$-differentially private algorithm with an additive error of $O(\epsilon^{-1}\log n)$ and no multiplicative error for the existence variant;
    \item an $\epsilon$-differentially private algorithm with an additive error $O(\epsilon^{-1}\max(k,\log n)\cdot\log n)$ for the counting variant;
    \item an $\epsilon$-differentially private algorithm with an additive error of $O(\epsilon^{-1}\log n)$ and multiplicative error $O(1)$ for the reporting variant for a special class of patterns.
\end{itemize}
The error bounds hold with high probability.
All of these algorithms return a witness, that is, if there exists a substring of $S$ with distance at most $k$ to $P$, then the algorithm returns a substring of $S$ with distance at most $(1+\gamma)k+\alpha$ to $P$.

Further, we complement these results by a lower bound, showing that any algorithm for the existence variant which also returns a witness must have an additive error of $\Omega(\epsilon^{-1}\log n)$ with constant probability.
\end{abstract}

\section{Introduction}


Analyzing sequential data is essential to many modern data analysis tasks, including signal processing, route planning, and genetic matching. Since those data can include highly sensitive personal data, the problem of
\emph{sanitizing sequential data to protect privacy} while \emph{preserving patterns that occur within these sequences} has received growing attention \cite{DBLP:journals/isci/ChenFMDW13,DBLP:conf/ccs/ChenAC12,DBLP:journals/tetc/WeiLYZL21,DBLP:conf/mdm/MaruseacG16,DBLP:conf/cikm/BonomiX13,DBLP:conf/ifip12/AjalaAIL18,DBLP:journals/tkdd/BernardiniCCGLP21,DBLP:journals/tkde/BernardiniCGGLPPPSS23,DBLP:journals/kbs/KomishaniAD16,DBLP:conf/nips/KimGKY21,DBLP:conf/cikm/BonomiXCF12,DBLP:conf/kdd/BhaskarLST10,DBLP:conf/kdd/ChenFDS12,DBLP:journals/pvldb/HeCMPS15,DBLP:conf/sigmod/ZhangXX16,DBLP:conf/dasfaa/LiWYCYL18,DBLP:conf/infocom/WangLPRLC20}. The applications considered in these papers range from genetic matching \cite{DBLP:journals/tetc/WeiLYZL21} over natural language processing \cite{DBLP:conf/nips/KimGKY21,DBLP:conf/ccs/ChenAC12} to travel pattern mining \cite{DBLP:conf/mdm/MaruseacG16,DBLP:journals/isci/ChenFMDW13,DBLP:journals/kbs/KomishaniAD16,DBLP:conf/kdd/ChenFDS12,DBLP:journals/pvldb/HeCMPS15,DBLP:conf/sigmod/ZhangXX16}. 
 These works partially use differential privacy \cite{DBLP:journals/tetc/WeiLYZL21,DBLP:conf/mdm/MaruseacG16,DBLP:conf/cikm/BonomiX13,DBLP:conf/ccs/ChenAC12,DBLP:conf/nips/KimGKY21,DBLP:conf/cikm/BonomiXCF12,DBLP:conf/kdd/BhaskarLST10,DBLP:conf/kdd/ChenFDS12,DBLP:journals/pvldb/HeCMPS15,DBLP:conf/sigmod/ZhangXX16,DBLP:conf/dasfaa/LiWYCYL18,DBLP:conf/infocom/WangLPRLC20} or other privacy measures \cite{DBLP:conf/ifip12/AjalaAIL18,DBLP:journals/tkdd/BernardiniCCGLP21,DBLP:journals/tkde/BernardiniCGGLPPPSS23,DBLP:journals/isci/ChenFMDW13,DBLP:journals/kbs/KomishaniAD16}. 
The utilities of the proposed algorithms are shown by extensive experiments. Despite this effort led by practitioners, there is a lack of theoretical studies of algorithms analyzing sequential data that preserve differential privacy while giving \emph{provable guarantees on the accuracy} of such an algorithm. The goal of this paper is to initiate such a study.

Differential privacy is the de-facto privacy standard used in modern data analysis~\cite{DBLP:journals/tbd/YangWRY21}. Its definition offers strong privacy guarantees and is due to Dwork~et~al.~\cite{DBLP:conf/tcc/DworkMNS06}. Informally, the definition states that the output distributions of an algorithm should be close on close data sets, i.e., the output should not depend much on any single data point. In more detail, we call two data sets which differ in a single data point \emph{neighbouring}. A randomized algorithm is \emph{$\epsilon$-differentially private}, if for any two neighbouring input data sets, the output distributions of the algorithm differ by at most a factor of $e^{\epsilon}$. 

A natural data type to model sequential data is a \emph{string}, which is a sequence of symbols drawn from some predefined alphabet. Strings are used to model any type of text data, as well as genetic data and event series. One of the most fundamental problems in string algorithms is the pattern matching problem: For a string $S$ and a pattern string $P$, decide if $P$ occurs in $S$ (\emph{existence}), count the occurrences of $P$ in $S$ (\emph{counting}), or report all positions in $S$ where $P$ occurs (\emph{reporting}). The pattern matching problem and its variants have been an active research field for more than 50 years with applications ranching from signal processing over computational biology to information retrieval. 


In this work, we begin a theoretical study of \emph{differentially private pattern matching} for strings. Specifically, we study the approximate pattern matching problem and show that combining well-known techniques from differential privacy \cite{DBLP:journals/fttcs/DworkR14} with modern techniques used by the pattern matching community to solve the approximate pattern matching problem \cite{conf/focs/Charalampopoulos20} can be used to prove interesting new theoretical upper and lower bounds on the error needed by any differentially private algorithm solving the approximate pattern matching problem. We see this as a proof of concept that the field of \emph{differentially private string algorithms} is a promising direction for future research.

In the following, we describe the problem considered in this work in more detail.
\paragraph*{Privacy model and motivation.}
In this paper we focus on protecting \emph{individual positions in the string $S$}, that is, the pattern matching algorithm should have similar output distributions when matching $P$ in $S$ and $T$, if $S$ and $T$ differ in few positions. That is, we call two strings $S$ and $S'$ neighbouring, if they differ in one position. This privacy model has also been used for strings by Fichtenberger~et~al.~\cite{jalajmonika} for the problem of counting all occurrences of any pattern of a given length in a stream. It corresponds to \emph{event-level privacy} for continual observation, i.e., instead of protecting an entire user's data, single events are protected \cite{DBLP:conf/stoc/DworkNPR10}. Since the output of the algorithm has a similar distribution whether any single event happened or not, this can be seen as providing \emph{plausible deniability} of any given event. Thus, this model makes sense in settings where a user cares about single events or outliers in their behaviour being concealed, while still allowing the service to draw conclusions about their general behaviour. For example, the string could be a sequence of locations a person visited, and hiding any single position in that sequence corresponds to hiding whether a person visited any particular location at a given time or not. For another example, the string can be a list of items bought by a customer through an online service, and any single purchase is masked. This definition can still allow trends to be detected, e.g. if a user buys chocolate every day, a differentially private algorithm may reveal that the person buys lots of chocolate; however, if a user buys a single sensitive item, e.g. a pregnancy test, this data is concealed. 


\paragraph*{Approximate pattern matching.}
Note that we cannot hope to solve the pattern matching problem \emph{exactly} while satisfying this definition of differential privacy: For any pattern $P$, we can easily find strings $S$ and $S'$ such that $P$ occurs in $S$, $P$ does not occur in $S'$, and $S$ and $S'$ differ in only one position. Thus, any reasonable solution to the exact pattern matching problem with pattern $P$ should be able to differentiate between $S$ and $S'$, which contradicts the goal of differential privacy. 

Therefore we study the \emph{$k$-approximate pattern matching problem}: For a pattern $P$ of length $m$, a string $S$ of length $n\geq m$, and a parameter $k\leq m$, we want to find all substrings of length $m$ of $S$, such that the distance between the substring and $P$ is at most $k$. This problem has been extensively studied in the non-private setting (recent work includes \cite{DBLP:conf/cpm/Starikovskaya17,DBLP:journals/ipl/ZhangA17,DBLP:conf/icalp/GawrychowskiU18,conf/focs/Charalampopoulos20,DBLP:journals/tcs/BernardiniPPR20}, see also the survey by Navarro~\cite{DBLP:journals/csur/Navarro01}) since it captures several applications more fully than exact matching: In many applications, the string and the pattern might suffer some corruption, e.g. mutation in DNA sequences, measurement or transmission errors, or typing errors \cite{DBLP:journals/csur/Navarro01}. In this work, we consider the Hamming distance as distance measure. In order to design algorithms that fulfill differential privacy, we allow some slack on $k$: We want to find all length-$m$ substrings (given by their starting and ending position in $S$) of distance at most $k$ to $P$, but we allow the algorithm to return length-$m$ substrings of distance at most $(1+\gamma)k+\alpha$, for a \emph{multiplicative error} $\gamma$ and an \emph{additive error} $\alpha$. We also consider the natural \emph{counting} and \emph{existence} variants of this problem (the formal definitions of these problems are given in Section \ref{sec:prelims}). The goal is to analyze which values of $\gamma$ and $\alpha$ are possible and necessary to solve the approximate pattern matching problem while preserving $\epsilon$-differential privacy. 

\paragraph*{Results.}
First, we note that there is a trivial algorithm with additive error $O(m)$, which is $\epsilon$-differentially private for all $\epsilon$: We simply output all substrings of $S$, i.e. all pairs $(i,i+m-1)$ for $i\leq n-m$. Since this is independent of the string $S$, the algorithm is differentially private by default, and since the true distance is always a value between $0$ and $m$, the additive error is at most~$m$. 

In this paper, we give new trade-offs for the existence, counting and reporting variants of the problem. First, we give an algorithm for the existence variant achieving $O(\log n)$ additive error and no multiplicative error. Then, for counting and reporting, we use results on (non-private) approximate pattern matching \cite{conf/focs/Charalampopoulos20} to differentiate between patterns fulfilling different properties: If the pattern is close to a periodic string with a small enough period, we can exploit that to give an algorithm for the reporting variant of the approximate pattern matching problem with constant multiplicative error and $O(\log n)$ additive error. Otherwise, we can use the results in \cite{conf/focs/Charalampopoulos20} to bound the number of substrings in $S$ which can be close to $P$, and use that fact to give an algorithm for the counting variant. Our upper bound results are summarized in the following two theorems.
\begin{theorem}[Summary of Lemma \ref{cor:existence}, Theorem \ref{thm:periodic}, Theorem \ref{thm:non-periodic}, and Lemma \ref{lem:smallk}]\label{thm:main}
    Let $n$ denote the length of input string $S$, $m\leq n$ the length of pattern $P$, and $k\leq m$ an integer. 
    \begin{enumerate}
        \item There exists an algorithm for the existence variant of the $k$-approximate pattern matching problem which with probability $1-\beta$ has an additive error of at most $\alpha=O(\log(n/\beta)/\epsilon)$ and a multiplicative error $\gamma=0$.
        \item For $k=\Omega(\log(n/\beta)/\epsilon)$, there exists an algorithm for the counting variant of the $k$-approximate pattern matching problem which with probability $1-\beta$ has a multiplicative error of at most $\gamma=O(\log(n/\beta)/\epsilon)$ and an additive error $\alpha=0$.
        \item For $k=O(\log(n/\beta)/\epsilon)$, there exists an algorithm for the counting variant of the $k$-approximate pattern matching problem which with probability $1-\beta$ has an additive error of at most $\alpha=O(\log^2(n/\beta)/\epsilon^2)$ and a multiplicative error $\gamma=0$.
    \end{enumerate}
    Further, all of these algorithms return a \emph{witness}, i.e., a length-$m$ substring of $S$ with Hamming distance at most $(1+\gamma)k+\alpha$ to $P$.
\end{theorem}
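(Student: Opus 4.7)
The proof of Theorem~\ref{thm:main} assembles the three parts from the four separately stated results. I outline how I would approach each part and identify the main obstacle.

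For Part 1 (existence), the natural plan is to treat the sequence of Hamming distances $d_i = \disth(S[i..i+m-1], P)$ for $i=1,\dots,n-m+1$ as a stream of sensitivity-$1$ queries, since changing a single character of $S$ alters each $d_i$ by at most $1$. Deciding whether some $d_i\le k$ is precisely a below-threshold question, which the Sparse Vector Technique answers with $O(\log(n/\beta)/\epsilon)$ additive error and no multiplicative blowup: run SVT with privacy budget $\epsilon$ and threshold $k$, reading off the noisy $d_i$'s in order and halting the first time a noisy query is below a noisy threshold. With probability $1-\beta$ this either returns an $i$ with $d_i\le k + O(\log(n/\beta)/\epsilon)$, which serves as the required witness, or correctly certifies non-existence within the same slack. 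This is Lemma~\ref{cor:existence}.

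For Part 2 (counting, large $k$), I would split on pattern structure, following the dichotomy of Charalampopoulos et al.~\cite{conf/focs/Charalampopoulos20}. If $P$ is close to a string with a short period, the approximate occurrences in $S$ form arithmetic progressions whose union covers the exact set of matches; a differentially private reporting algorithm (Theorem~\ref{thm:periodic}) outputs these progressions, and summing their sizes produces a count incurring only multiplicative error $O(\log(n/\beta)/\epsilon)$. If $P$ is non-periodic in the appropriate technical sense, a structural bound from~\cite{conf/focs/Charalampopoulos20} shows that the number of length-$m$ substrings of $S$ within Hamming distance $O(k)$ of $P$ is $O(k)$ (Theorem~\ref{thm:non-periodic}); consequently the count at an appropriately truncated threshold has sensitivity $O(k)$, so Laplace noise of scale $O(k/\epsilon)$ yields additive error $O(k\log(n/\beta)/\epsilon)$, which is exactly the multiplicative error $O(\log(n/\beta)/\epsilon)$ claimed. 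In both subcases, a witness is extracted either from the reported progressions or by post-processing with Part 1 on the identified occurrences.

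For Part 3 (counting, small $k$), when $k=O(\log(n/\beta)/\epsilon)$, $k$ is already dominated by the additive error budget, so Lemma~\ref{lem:smallk} applies: one can repeatedly invoke the existence algorithm of Part 1 to peel off witnesses (each costing $O(\log(n/\beta)/\epsilon)$ additive error), up to a number of iterations bounded by the structural bound on matches, yielding a total additive error of $O(\log^2(n/\beta)/\epsilon^2)$ and no multiplicative error.

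The main obstacle is Part 2. Both subroutines rely on nontrivial structural results from~\cite{conf/focs/Charalampopoulos20}; in the non-periodic regime the delicate step is to argue that truncating the count at the right threshold exposes only $O(k)$-bounded sensitivity even though naive sensitivity of the count of matches is $\Theta(m)$, while in the periodic regime one must design a private reporting mechanism that captures the arithmetic-progression structure with at most $O(\log(n/\beta)/\epsilon)$ additive slack per progression and then translate this reporting guarantee cleanly into a count. The existence result of Part 1 and the small-$k$ reduction of Part 3 are, by comparison, essentially direct applications of standard differential-privacy tools once the right formulation is in place.
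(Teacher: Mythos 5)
Your Part 1 matches the paper exactly (run the sparse-vector/below-threshold algorithm on the sensitivity-$1$ sequence $d_i=\disth(S[i,i+m-1],P)$ with threshold $k$ shifted by the SVT error), and your periodic subcase of Part 2 is the paper's Theorem~\ref{thm:periodic}. However, your non-periodic subcase of Part 2 has a genuine gap. You propose to add Laplace noise of scale $O(k/\epsilon)$ to a truncated count and then assert that an additive error of $O(k\log(n/\beta)/\epsilon)$ \emph{on the count value} ``is exactly the multiplicative error $O(\log(n/\beta)/\epsilon)$ claimed.'' This conflates two different error measures. In this paper the counting variant requires a \emph{one-sided sandwich in the Hamming-distance threshold}: the output $c$ must satisfy $c_k(S)\leq c\leq c_{(1+\gamma)k+\alpha}(S)$. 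The Laplace mechanism gives a symmetric, two-sided perturbation of the number $c_k(S)$ itself; if, say, $c_k(S)=c_{(1+\gamma)k}(S)=0$, positive noise already violates the upper condition, and no rescaling by $k$ turns a count-value error into a threshold error. Moreover a noisy count identifies no positions, so it cannot supply the witness that the theorem requires (running the existence algorithm separately only certifies $c>0$ correctly when an occurrence exists well below threshold, not in general). The paper instead obtains the sandwich guarantee constructively: within each length-$(2m-1)$ block it runs Algorithm~\ref{alg:BT} repeatedly, each time restarting just after the last reported position, with per-instance budget $\epsilon/O(k)$; Lemma~\ref{lem:twocases} bounds the number of iterations needed by $O(k)$, every reported position provably has distance at most $(1+\gamma)k$, every skipped position provably has distance more than $k$, and any reported position is a witness. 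Your sensitivity observation is correct but is used by the paper only to bound the \emph{number of SVT restarts}, not to calibrate Laplace noise.

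Two smaller points. In Part 3 the final bound $O(\log^2(n/\beta)/\epsilon^2)$ is right, but it does not arise from summing $O(\log(n/\beta)/\epsilon)$ additive errors over the iterations; it arises because each of the $O(K)$ SVT instances (with $K=O(\log(n/\beta)/\epsilon)$ playing the role of $k$) receives only an $\epsilon/O(K)$ share of the budget, so each single instance already incurs threshold error $O(K\log(n/\beta)/\epsilon)$. Also, Parts 2 and 3 each still require the periodic/non-periodic dichotomy of Lemma~\ref{lem:twocases} at the appropriate scale ($k$ for Part 2, $K$ for Part 3), with the periodic branch handled by Theorem~\ref{thm:periodic}; your Part 3 omits this case split.
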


\begin{theorem}[informal version of Theorem \ref{thm:periodic}]\label{thm:periodic_int}
 Let $P$ be a string of length $m$. If $P$ has Hamming distance at most $2k$ from a periodic string of period at most $qm/((\log(n/\beta)/\epsilon)+k)$, for some suitable constant $q$, then there exists an algorithm for the reporting variant of the $k$-approximate pattern matching problem for pattern $P$ and any string $S$ of length $n$ which with probability $1-\beta$ has a multiplicative error of $O(1)$ and an additive error of $O(\log(n/\beta)/\epsilon)$.
\end{theorem}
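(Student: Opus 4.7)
The plan is to combine a structural decomposition of approximate matches for periodic patterns, in the spirit of~\cite{conf/focs/Charalampopoulos20}, with the Sparse Vector Technique from differential privacy. The starting point is a structural lemma: when $P$ lies within Hamming distance $2k$ of a string of period $p = O(m/(k+\log(n/\beta)/\epsilon))$, the set of $k'$-approximate occurrences of $P$ in $S$, for any $k' = O(k+\log(n/\beta)/\epsilon)$, is contained in a union of $O(1)$ arithmetic progressions with common difference $p$. The common difference and starting phase of these progressions depend only on $P$ and so can be computed publicly; only the endpoints (the ranges in which a given progression actually contains matches) depend on $S$ and need private treatment.

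Given this decomposition, the algorithm proceeds as follows. For each of the $O(1)$ arithmetic progressions, walk through the candidate positions in order and apply the Sparse Vector Technique with threshold set slightly above $k$, namely $k + O(\log(n/\beta)/\epsilon)$ with an additional constant-factor multiplicative inflation. Since changing one character of $S$ shifts the Hamming distance at any fixed position by at most $1$, each Hamming-distance query has sensitivity $1$, so SVT is $\epsilon$-differentially private; composing over a constant number of progressions preserves pure $\epsilon$-differential privacy up to rescaling $\epsilon$ by a constant. For every position whose noisy Hamming distance falls below the threshold, output that position both as a report and as a witness. The additive error $O(\log(n/\beta)/\epsilon)$ comes directly from SVT's noise, and the multiplicative slack $O(1)$ absorbs the inflated threshold needed to ensure that all true $k$-matches are reported with high probability.

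The hard part will be establishing the robust structural lemma. Although~\cite{conf/focs/Charalampopoulos20} characterizes approximate occurrences of periodic patterns for non-private algorithmic purposes, for the privacy setting I need a version where (i) the number of progressions is bounded by an absolute constant even after inflating the distance threshold by the required slack, (ii) the decomposition is robust in the sense that flipping a single character of $S$ changes the endpoints of each progression by only $O(1)$, and (iii) within each progression the matches form a contiguous run, so that SVT applied sequentially along a progression correctly identifies the relevant interval. Verifying these combinatorial properties of periodic strings, and translating the SVT guarantees into the final bounds on $\alpha$ and $\gamma$, is the technical heart of the proof.
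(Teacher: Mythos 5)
There is a genuine gap, and it sits exactly where your proposal is least specific: the privacy accounting of the reporting step. You propose to walk along each arithmetic progression with the Sparse Vector Technique and ``for every position whose noisy Hamming distance falls below the threshold, output that position.'' The below-threshold variant of SVT is $\epsilon$-differentially private only if it halts at the \emph{first} query that crosses the threshold; to report every below-threshold query you must restart (or use the numeric-sparse variant), paying privacy budget proportional to the \emph{number of reported positions}. In the periodic case that number is not $O(1)$ per progression --- for $S=Q^{\infty}[0,n-1]$ there is an occurrence at every multiple of $|Q|$, i.e.\ $\Theta(n/|Q|)$ of them --- so rescaling $\epsilon$ by a constant does not suffice and the resulting error would blow up. The paper's algorithm avoids this entirely: in each window of length at most $3m/2$ it makes only \emph{two} SVT calls (one forward, one on the reversed strings) to locate the leftmost and rightmost approximate occurrences $i$ and $j$, and then outputs the whole progression $i+\ell|Q|$, $0\le\ell\le\lfloor (j-i)/|Q|\rfloor$, with \emph{no further data-dependent queries}. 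The justification for outputting everything in between is Theorem~I.7 of Charalampopoulos et al.: once a window is anchored by $K$-mismatch occurrences at both ends, the whole window is within distance $3d$ of $Q^{\infty}$, so every progression position is a $(3d+2k)$-mismatch occurrence --- this is precisely where the multiplicative error $O(1)$ and the one-sided error enter.

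Relatedly, the structural lemma you plan to prove is both stronger than what is available and partly false. The known decomposition gives $O(d^2)$ (not $O(1)$) arithmetic progressions of exact $K$-mismatch occurrences, it holds only for a text of length at most $3m/2$ whose first and last length-$m$ windows are already $K$-mismatch occurrences (a precondition the paper establishes \emph{using} the two SVT calls), and the phase of the progressions depends on $S$ (on where the nearly-$Q$-periodic region of $S$ sits), not only on $P$. Your property (iii), contiguity of the true $k$-matches within a progression, also fails in general; what holds is the weaker one-sided statement above, that all progression positions between the extremal occurrences are matches at the \emph{inflated} threshold. Your property (ii), low sensitivity of the progression endpoints, is not needed: privacy comes from the sensitivity-$1$ queries fed to SVT, not from stability of the output set. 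To repair the proof you should restructure it as: partition $S$ into $O(n/m)$ overlapping windows of length $<3m/2$, use two SVT instances per window to find $i$ and $j$, and invoke the anchored structural lemma to emit the full progression between them.
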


We complement these results with lower bounds on the necessary additive error for the $k$-approximate pattern matching problem under $\epsilon$-differential privacy. These lower bounds specifically show that the additive error for the existence variant from Theorem~\ref{thm:main} 
is asymptotically optimal for~$m\ll n$:

\begin{theorem}[Informal version of Theorem \ref{thm:lowerbound}]\label{thm:lowerbound_int}
    Let $P$ be any string of length $m$ and let $k<m$ be an integer. Assume there is an $\epsilon$-differentially private algorithm which solves the existence variant of the $k$-approximate pattern matching problem for pattern $P$ and any string $S$ and returns a witness, with an additive error at most $\alpha$ with constant probability. Then either $\alpha=\Omega(m-k)$, or both $m=\Omega(\epsilon^{-1}\log n)$ and $\alpha=\Omega(\epsilon^{-1}\log(n/m))$.
\end{theorem}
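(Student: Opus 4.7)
The plan is a packing argument that combines a pigeonhole count of disjoint valid-witness sets with group privacy applied at a ``no-witness'' reference string. It suffices to exhibit a single hard pattern, so I work in the binary alphabet with $P := 0^m$. For each offset $t \in \{0, 1, \ldots, n - m\}$ I define
\[
S_t \;:=\; 1^{t}\, 0^{m-k}\, 1^{\,n - t - m + k},
\]
so the length-$m$ window of $S_t$ beginning at $t$ is $0^{m-k}1^k$ and matches $P$ with exactly $k$ mismatches. A short direct calculation shows that the length-$m$ window beginning at $p$ has Hamming distance at most $k + \alpha$ to $P$ iff $p \in W_t := [t - k - \alpha,\, t + \alpha]$. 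Picking a set $T$ of offsets spaced by $\Delta := k + 2\alpha + 1$ therefore makes $\{W_t\}_{t \in T}$ pairwise disjoint and gives $|T| = \Theta((n-m)/\Delta)$.

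The next step is to take the reference $S^{\ast} := 1^n$, in which every length-$m$ window has distance $m$ from $P$. Assuming $\alpha < m - k$ (otherwise case (A) holds immediately), $S^\ast$ admits no valid witness, so correctness forces $\Pr[\alg(S^\ast) \text{ outputs a witness}] \leq \beta$, while correctness on each $S_t$ gives $\Pr[\alg(S_t) \in W_t] \geq 1 - \beta$. Since $\disth(S^\ast, S_t) = m - k$, $\epsilon$-DP group privacy produces $\Pr[\alg(S^\ast) \in W_t] \geq e^{-\epsilon(m-k)}(1-\beta)$, and summing over the disjoint $W_t$'s yields $|T| = O(e^{\epsilon(m-k)})$ for $\beta$ a small constant. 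Combining with $|T| = \Theta((n-m)/\Delta)$ gives the master inequality
\[
k + 2\alpha + 1 \;=\; \Omega\!\left(n\, e^{-\epsilon(m-k)}\right). \qquad (\ast)
\]

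The theorem's dichotomy then follows by case analysis of $(\ast)$. If $\alpha \geq c(m - k)$ for an absolute constant $c$, case (A) holds. Otherwise $k + 2\alpha + 1 = O(m)$, so $(\ast)$ forces $e^{\epsilon(m-k)} = \Omega(n/m)$, i.e.\ $m - k = \Omega(\epsilon^{-1}\log(n/m))$; the self-referential bound $m \geq \epsilon^{-1}\log(n/m)$ then resolves to $m = \Omega(\epsilon^{-1}\log n)$, giving the $m$-part of case (B). Feeding this back into $(\ast)$, and ruling out the regime in which the $k$-term on the left alone matches the right-hand side (which under the standing constraints would require $k$ to exceed the target $\alpha$-bound), yields $\alpha = \Omega(\epsilon^{-1}\log(n/m))$.

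The main obstacle I anticipate is the final step of the case analysis: transferring the lower bound on $k + 2\alpha + 1$ to a lower bound on $\alpha$ alone once $\alpha$ is assumed asymptotically smaller than the target. I expect to split further on whether $k$ is comparable to $\alpha$; in the regime where $k$ is large one may need to augment the packing argument with a companion two-point construction (for instance $1^n$ versus $1^{t}\,0^m\,1^{n - t - m}$ used in the classical two-hypothesis DP style) to enforce the $\alpha$-bound independently. A minor technicality is aligning the algorithm's yes/no answer with its witness so that $\Pr[\alg(S^\ast) \in \bigcup_{t\in T} W_t]$ is genuinely capped by the failure probability of the ``no'' answer on $S^\ast$.
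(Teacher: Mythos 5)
Your construction and the ``master inequality'' $(\ast)$ are sound as far as they go, and the first half of your case analysis (deriving $m=\Omega(\epsilon^{-1}\log n)$ when $\alpha=o(m-k)$) essentially matches the paper's first packing argument. The genuine gap is in the second half: $(\ast)$ provably cannot yield $\alpha=\Omega(\epsilon^{-1}\log(n/m))$. The reason is that your group-privacy step pays an exponent of $\epsilon(m-k)$, because $\disth(S^\ast,S_t)=m-k$ (and the pairwise distances $\disth(S_t,S_{t'})$ are $2(m-k)$). In the regime where $m-k\gg\epsilon^{-1}\log n$ but $\alpha$ is tiny --- say $m-k=100\,\epsilon^{-1}\log n$ and $\alpha=1$, which the theorem must exclude since then neither $\alpha=\Omega(m-k)$ nor $\alpha=\Omega(\epsilon^{-1}\log(n/m))$ holds --- the right-hand side of $(\ast)$ is $\Omega(n\cdot n^{-100})$, which is vacuously below the left-hand side. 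So the obstacle you flag at the end is not a technicality to be patched by a two-point companion argument; it is fatal to any packing whose members are pairwise $\Theta(m-k)$ apart. The fix (which is what the paper does in its second construction) is to build a packing family whose members are pairwise within Hamming distance $O(\alpha)$: plant a ``barely matching'' window of distance exactly $k$ in the designated block and fill \emph{every other} block with a ``barely failing'' window of distance exactly $k+\alpha+1$, so that two instances differ only in the $2(\alpha+1)$ positions needed to toggle one block between these two states. Group privacy then costs only $e^{-(2\alpha+2)\epsilon}$, and the packing inequality reads off $\alpha\geq(2\epsilon)^{-1}\ln(n/(3m))-1$ directly, with no $k$ or $m-k$ term to fight.

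A secondary issue: the theorem quantifies over \emph{all} patterns $P$ (the paper explicitly stresses that the lower bound holds for every $P$, periodic or not, since the upper bounds are pattern-dependent), whereas you prove it only for $P=0^m$. Your specific distance computations for $W_t$ rely on that choice and do not transfer verbatim to, e.g., periodic $P$. The paper sidesteps this by surrounding a planted copy of a prefix-modified $P$ with a symbol $\$$ not occurring in $P$, which makes every non-designated window have a controlled, large distance regardless of the structure of $P$; you would need an analogous device to cover arbitrary patterns.
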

Note that Theorem~\ref{thm:lowerbound_int} gives a lower bound that holds for \emph{any} pattern $P$, no matter if it is close to a periodic substring of small period, or not.

In this work, we mostly care about the privacy-to-accuracy trade-off of the problem. However, for completeness, we show in Appendix \ref{sec:runtime}, that the algorithms achieving the upper bounds stated above run in time $O(nm+m^3)$, assuming that any needed random noise can be drawn in constant time. We did not try to optimize this run time. 

\paragraph*{Related work.}
Fichtenberger~et~al.~\cite{jalajmonika} show how to count all patterns of a bounded length over a stream while preserving differential privacy. It is given as a direct application of their general differentially private counting algorithm. Their privacy model is the same as ours, however, their error definition is an error on the \emph{value of the count}, instead of an error on the Hamming distance, as in our paper.

There is a large body of work on mining frequent patterns or $q$-grams (substrings of length $q$) from a set of strings while satisfying differential privacy \cite{DBLP:conf/kdd/BhaskarLST10,DBLP:conf/ccs/ChenAC12,DBLP:conf/cikm/BonomiX13,DBLP:conf/mdm/MaruseacG16,DBLP:conf/cikm/BonomiXCF12,DBLP:conf/nips/KimGKY21,DBLP:conf/kdd/ChenFDS12,DBLP:conf/sigmod/ZhangXX16,DBLP:conf/dasfaa/LiWYCYL18}. In those works, the input data set consists of multiple strings, and two neighbouring data sets differ in one string in the set. The utilities of these algorithm are evaluated by experiments.
    
    There is a line of work on \emph{combinatorial string sanitization} focusing on hiding a \emph{given set of sensitive patterns} \cite{DBLP:conf/ifip12/AjalaAIL18,DBLP:journals/tkdd/BernardiniCCGLP21,DBLP:journals/tkde/BernardiniCGGLPPPSS23}. Ajala~et~al.~\cite{DBLP:conf/ifip12/AjalaAIL18} consider sanitizing the string by replacing letters. They show that the problem of finding the minimum number of letters to be replaced is NP-hard and propose an algorithm. Bernardini~et~al.~\cite{DBLP:journals/tkdd/BernardiniCCGLP21} propose an algorithm for finding the minimal length string maintaining the order and frequency of all non-sensitive patterns, and another algorithm for finding a string maintaining the order and frequency of all non-sensitive patterns while minimizing the edit distance between the original string and the output string. Bernadini~et~al.~\cite{DBLP:journals/tkde/BernardiniCGGLPPPSS23} study the connection between string sanitization and frequent pattern mining. Compared to our work, they mask \emph{all} occurrences of sensitive patterns, however, the specific patterns have to be given in advance. On the other hand, our definition hides any single (or any set of few) occurrences of \emph{any} potentially sensitive pattern. Note that in those works, the goal is to mask \emph{exact} occurrences of the sensitive patterns, i.e. it still allows occurrences of substrings which are close to a sensitive pattern.
    
    There is previous work on private pattern matching from a cryptographic perspective with applications in genetic matching \cite{mahdi2021privacy,DBLP:conf/acsac/MainardiBP19,DBLP:conf/ccs/Troncoso-PastorizaKC07,DBLP:journals/access/QinZZX20,DBLP:journals/bioinformatics/ShimizuNR16,DBLP:journals/tcbb/SudoJNS19,DBLP:journals/soco/WeiZX18}: In the model considered in these works, data is held by one party (or the cloud) and queries are sent by another (or multiple other) parties; encryption is used to ensure privacy of the data and the query. In these works, the query party can find out whether their query pattern occurs in the string or collection of strings in the data, while nothing else about the data is revealed to the query party and the query is not revealed to the data holder. In a similar model, two parties each hold a string and want to compare how similar they are, without revealing anything else to each other~\cite{DBLP:journals/ijdsa/VaiwsriRC22}. 
    Note that the goal in differential privacy is orthogonal to these privacy definitions: In our definition, the data holder knows everything; however, the \emph{query answer} should conceal any individual string positions of the data holder's string.

    \paragraph*{Paper organization.} The rest of the paper is organized as follows. In Section~\ref{sec:prelims}, we formally define the problem and recall some definitions and theorems for differential privacy and strings. In Section~\ref{sec:upperbounds}, we prove Theorems~\ref{thm:main} and \ref{thm:periodic_int}. In Section~\ref{sec:lowerbound}, we prove Theorem~\ref{thm:lowerbound_int}. Finally, we conclude with some directions for future research~(Section~\ref{sec:conclusion}). In Appendix~\ref{sec:runtime}, we analyze the runtime of our algorithms.


\section{Preliminaries}\label{sec:prelims}
We denote an interval of integers $\{a,a+1,\dots, b\}$ as $[a,b]$. 
\subsection{String Preliminaries}
A \emph{string} $S$ of length $n$ is a sequence $S[0]S[1]\dots S[n-1]$ of symbols from an alphabet $\Sigma$. The length of $S$ is denoted $|S|$. We call $S[a,b]:=S[a]S[a+1]\dots S[b]$ a \emph{substring} of $S$. We denote by $\rev{S}:=S[n-1]S[n-2]\dots S[0]$ the \emph{reverse} of string $S$. 
For $k\in\mathbb{N}\cup \{\infty\}$ we denote by $S^k$ the string obtained by concatenating $S$ $k$ times. A string $S$ is called primitive if there does not exist a string $T$ such that $S=T^k$ for $k\geq 2$.

A \emph{period} of a string $S$ is a number $\pi\in[0,n-1]$ such that $S[i]=S[i+\pi]$ for all $i\in[0,n-1-\pi]$. A string $S$ is \emph{periodic} if it has a period $\pi$ with $\pi<n/2$. 

The \emph{Hamming distance} between two strings $S$ and $T$ with $n=|T|=|S|$ is defined as 
\begin{align*}
    \disth(T,S)=|\{i\in[0,n-1]: T[i]\neq S[i]\}|.
\end{align*}
For a string $P$ of length $m$ and a string $S$ of length $n$ with $n\geq m$, $i\in[0,n-m]$, we call $S[i,i+m-1]$ a \emph{$k$-mismatch occurrence} if $\disth(S[i,i+m-1],P)\leq k$.

\subsection{Privacy Definition and Problem Definitions}

Two strings $S$ and $S'$ of length $n$ are defined as \emph{neighbouring}, if their Hamming distance is one, i.e., if they differ in one position.

We generally define a \emph{pattern matching algorithm} to be an algorithm taking as input a string $S$ of length $n$ and a pattern $P$, and outputting either a Boolean value (\emph{existence}), a natural number in $[0,n-1]$ (\emph{counting}), or a subset of $[0,n-1]$ (\emph{reporting}). 

We say a pattern matching algorithm $\alg:\Sigma^{*}\times\Sigma^m\rightarrow \mathrm{range}(\alg)$ is \emph{$\epsilon-$differentially private}, if for all $\mathrm{Out}\subseteq \mathrm{range}(\alg)$, all patterns $P$ of length $m$ and all pairs of neighbouring strings $S$ and $S'$,
\begin{align*}
\Pr(\alg(S,P)\in \mathrm{Out})\leq e^{\epsilon}\cdot\Pr(\alg(S',P)\in \mathrm{Out}),
\end{align*}
where the probabilities are taken over the internal randomness of $\alg$.
\begin{definition}[$k$-approximate pattern matching problem with one-sided error, reporting variant]
Given a string $S$ of length $n$, a pattern $P$ of length $m$ and a parameter $k$, output a set of indices $I\in[0,n-m]$ such that
\begin{enumerate}
\item If $\disth(P,S[i,i+m-1])\leq k$ 
for an $i\in[0,n-m]$, then $i\in I$,
\item If $i\in I$ then $\disth(P,S[i,i+m-1])\leq (1+\gamma) k+\alpha$.
\end{enumerate}
We call $\gamma$ the \emph{multiplicative error} and $\alpha$ the \emph{additive error}.
\end{definition}

In the following, let $c_x(S,P)$ denote the number of positions $i$ in $S$  such that $\disth(P,S[i,i+m-1])\leq x$. If $P$ is clear from context, we will sometimes write $c_x(S)$ for $c_x(S,P)$.
\begin{definition}[$k$-approximate pattern matching problem with one-sided error, counting variant]
Given a string $S$ of length $n$, a pattern $P$ of length $m$ and a parameter $k$, output a number $c$ such that
\begin{enumerate}
\item $c\geq c_{k}(S,P)$, 
\item  $c\leq c_{(1+\gamma)k+\alpha}(S,P)$.
\end{enumerate}
Further, if $c>0$, additionally output a position $i$ fulfilling $\disth(P,S[i,i+m-1])\leq (1+\gamma)k+\alpha$. We call $i$ a \emph{witness}. 
We call $\gamma$ the \emph{multiplicative error} and $\alpha$ the \emph{additive error}.
\end{definition}

\begin{definition}[$k$-approximate pattern matching problem with one-sided error, existence variant]
    Given a string $S$ of length $n$, a pattern $P$ of length $m$ and a parameter  $k$, output
\begin{enumerate}
\item YES, if there exists $i\in[0,n-m]$ such that $\disth(P,S[i,i+m-1])\leq k$, 
\item  NO, if there does not exist $i\in[0,n-m]$ such that $\disth(P,S[i,i+m-1])\leq (1+\gamma)k+\alpha$.
\end{enumerate}
Further, if the answer is YES, additionally output a position $i$ fulfilling $\disth(P,S[i,i+m-1])\leq (1+\gamma)k+\alpha$.
We call $i$ a \emph{witness}. We call $\gamma$ the \emph{multiplicative error} and $\alpha$ the \emph{additive error}.
\end{definition}


\subsection{Privacy Preliminaries}
First, we collect some definitions to introduce the \emph{Laplace mechanism}.
   \begin{definition}[$L_1$-sensitivity]\label{def:sensitivity}
     Let $f$ be a function $f:\chi\rightarrow \mathbb{R}^k$ for some universe $\chi$. The \emph{$L_1$-sensitivity of $f$} is defined as
    \begin{align}
    \max_{x,y\textnormal{ neighboring}}||f(x)-f(y)||_{1}.\label{eq:sensitivity}
    \end{align} 
    \end{definition}
\begin{definition}\label{def:laplace}
The \emph{Laplace distribution} centered at $0$ with scale $b$ is the distribution with probability density function 
\begin{align*}
    f_{\Lap(b)}(x)=\frac{1}{2b}\exp\left(\frac{-|x|}{b}\right).
\end{align*}
 We use $X\sim \Lap(b)$ or just $\Lap(b)$ to denote a random variable $X$ distributed according to $f_{\Lap(b)}(x)$.
\end{definition}
\begin{lemma}[Theorem 3.6 in \cite{DBLP:journals/fttcs/DworkR14}: Laplace Mechanism] \label{lem:Laplacemech} Let $f$ be any function $f:\chi\rightarrow \mathbb{R}^k$ with $L_1$-sensitivity $\Delta_1$. Let $Y_i\sim \Lap(\Delta_1/\epsilon)$ for $i\in[k]$. The mechanism defined as:
\begin{align*}
    A(x)=f(x)+(Y_1,\dots,Y_k)
\end{align*}
satisfies $\epsilon$-differential privacy.
\end{lemma}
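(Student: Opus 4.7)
The plan is a direct pointwise density comparison: for any two neighboring inputs $x, y \in \chi$, I would bound the ratio of the densities of $A(x)$ and $A(y)$ at every point $z \in \mathbb{R}^k$ by $e^{\epsilon}$, and then integrate this pointwise bound over any measurable output set $S \subseteq \mathbb{R}^k$ to obtain the definition of $\epsilon$-differential privacy.

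For the first step, I would write the density of $A(x)$ at $z$ explicitly. Since the coordinates $Y_1, \ldots, Y_k$ are independent Laplace random variables with scale $b = \Delta_1/\epsilon$, the density factors as $p_x(z) = \prod_{i=1}^k \tfrac{1}{2b}\exp(-|z_i - f(x)_i|/b)$, and analogously $p_y(z)$ centered at $f(y)$. Taking the ratio cancels the normalizing constants, leaving
\begin{equation*}
\frac{p_x(z)}{p_y(z)} = \exp\!\left(\frac{1}{b}\sum_{i=1}^k \bigl(|z_i - f(y)_i| - |z_i - f(x)_i|\bigr)\right).
\end{equation*}

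For the second step, I would apply the reverse triangle inequality coordinatewise, namely $|z_i - f(y)_i| - |z_i - f(x)_i| \leq |f(x)_i - f(y)_i|$, and sum the resulting bounds. By Definition \ref{def:sensitivity}, the sum $\sum_i |f(x)_i - f(y)_i| = \|f(x)-f(y)\|_1$ is at most $\Delta_1$. Substituting $b = \Delta_1/\epsilon$ gives the pointwise inequality $p_x(z) \leq e^{\epsilon} p_y(z)$.

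For the final step, I would integrate this inequality over any measurable set $S \subseteq \mathbb{R}^k$, obtaining $\Pr(A(x) \in S) = \int_S p_x(z)\,dz \leq e^{\epsilon}\int_S p_y(z)\,dz = e^{\epsilon}\Pr(A(y) \in S)$, which is exactly the definition of $\epsilon$-differential privacy. There is no real obstacle here, since this is a textbook fact; the only care required is to justify the multidimensional factorization of the Laplace density and to make sure the reverse triangle inequality is applied with the correct sign so that the coordinatewise bounds combine into the $L_1$-sensitivity.
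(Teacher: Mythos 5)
Your proof is correct and is exactly the standard argument for Theorem 3.6 in Dwork--Roth, which the paper cites without reproving: factor the product Laplace density, bound the pointwise ratio by the triangle inequality and the $L_1$-sensitivity, and integrate over the output set. No issues.
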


The following fact follows directly from the definition of differential privacy, and extends the privacy definition from neighbouring input strings to inputs which have small distance from each other.
\begin{lemma}[Group Privacy for Pattern Matching]\label{fact:group_privacy}
Let $S$ and $S'$ have a Hamming distance at most $\ell$, i.e. $\disth(S,S')\leq \ell$. Let $\alg$ be an $\epsilon$-differentially private pattern matching algorithm. Then for any pattern $P$,
\begin{align*}
    \Pr(\alg(S,P)\in \Out)\leq e^{\ell\epsilon} \cdot \Pr(\alg(S',P)\in\Out).
\end{align*}
\end{lemma}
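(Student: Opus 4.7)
The plan is to prove this by a standard hybrid argument, building a chain of neighbouring strings between $S$ and $S'$ and applying the definition of $\epsilon$-differential privacy along each link.

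First, I would observe that if $\disth(S,S')=d\leq\ell$, then there is an index set $D\subseteq[0,n-1]$ of size $d$ on which $S$ and $S'$ disagree. Fix any enumeration $i_1,i_2,\ldots,i_d$ of $D$, and define a sequence of strings $S=S_0,S_1,\ldots,S_d=S'$ where $S_j$ is obtained from $S_{j-1}$ by overwriting position $i_j$ with the corresponding symbol of $S'$. By construction, each consecutive pair $(S_{j-1},S_j)$ differs in exactly one position, so they are neighbouring in the sense defined in the privacy preliminaries.

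Next, for any fixed output set $\Out\subseteq\range(\alg)$ and any fixed pattern $P$, I would apply the $\epsilon$-differential privacy guarantee of $\alg$ to each of the $d$ neighbouring pairs, obtaining
\begin{align*}
\Pr(\alg(S_{j-1},P)\in \Out)\leq e^{\epsilon}\cdot\Pr(\alg(S_j,P)\in \Out)
\end{align*}
for every $j\in[1,d]$. Chaining these $d$ inequalities together telescopes to
\begin{align*}
\Pr(\alg(S,P)\in \Out)\leq e^{d\epsilon}\cdot\Pr(\alg(S',P)\in \Out)\leq e^{\ell\epsilon}\cdot\Pr(\alg(S',P)\in \Out),
\end{align*}
where the last step uses $d\leq\ell$ and the monotonicity of the exponential function. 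This is exactly the claimed bound.

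There is essentially no obstacle here; the only mild subtlety is that $d$ may be strictly less than $\ell$, in which case the chain has fewer than $\ell$ links, but since each link only contributes a factor of $e^{\epsilon}\geq 1$ the looser bound $e^{\ell\epsilon}$ still holds. The argument is entirely syntactic in the definition of neighbouring strings and differential privacy and does not interact with the specific structure of pattern matching algorithms, which is why the same proof works uniformly for the existence, counting, and reporting variants.
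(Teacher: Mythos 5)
Your proof is correct and is exactly the standard hybrid (chaining) argument that the paper implicitly invokes when it states this lemma as following ``directly from the definition'' of differential privacy without writing out a proof. Nothing is missing; the handling of the case $d<\ell$ via $e^{\epsilon}\geq 1$ is the right way to close that minor gap.
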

The following is a well-known Fact which follows immediately from the definition of differential privacy.
\begin{lemma}[Composition Theorem]\label{fact:composition_theorem} Let $\alg_1:\chi\rightarrow\mathrm{range}(\alg_1)$ be an $\epsilon_1$-differentially private algorithm and $\alg_2:\chi\times \mathrm{range}(\alg_1)\rightarrow\mathrm{range}(\alg_2)$ be an an $\epsilon_2$-differentially private algorithm. Then $(\alg_1,\alg_2 \circ \alg_1):\chi\rightarrow\mathrm{range}(\alg_1)\times\mathrm{range}(\alg_2)$ is $(\epsilon_1+\epsilon_2)$-differentially private. \end{lemma}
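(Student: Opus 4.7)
The plan is to verify the defining inequality of differential privacy directly by conditioning on the output of $\alg_1$. Fix neighbouring inputs $S, S' \in \chi$ and an arbitrary measurable set $\Out \subseteq \range(\alg_1) \times \range(\alg_2)$. For each $r_1 \in \range(\alg_1)$, define the slice $\Out_{r_1} = \{r_2 \in \range(\alg_2) : (r_1, r_2) \in \Out\}$, so that the event $(\alg_1(S), \alg_2(S, \alg_1(S))) \in \Out$ is the disjoint union, over $r_1$, of the events $\{\alg_1(S) = r_1\} \cap \{\alg_2(S, r_1) \in \Out_{r_1}\}$.

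The first step is a conditioning decomposition. Assuming (as is standard for composition arguments) that the internal randomness of $\alg_1$ is independent of that of $\alg_2$, we can write
\begin{align*}
\Pr\bigl((\alg_1(S), \alg_2(S, \alg_1(S))) \in \Out\bigr) = \sum_{r_1 \in \range(\alg_1)} \Pr(\alg_1(S) = r_1) \cdot \Pr(\alg_2(S, r_1) \in \Out_{r_1}),
\end{align*}
where for continuous output ranges the sum should be read as an integral against the appropriate density. The second step applies the two privacy guarantees term by term: Lemma's hypothesis for $\alg_1$ gives $\Pr(\alg_1(S) = r_1) \leq e^{\epsilon_1} \Pr(\alg_1(S') = r_1)$, and the hypothesis for $\alg_2$, read as saying that for every fixed auxiliary second argument $r_1$ the map $S \mapsto \alg_2(S, r_1)$ is $\epsilon_2$-differentially private, gives $\Pr(\alg_2(S, r_1) \in \Out_{r_1}) \leq e^{\epsilon_2} \Pr(\alg_2(S', r_1) \in \Out_{r_1})$. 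Multiplying these bounds inside the sum and pulling out the common factor $e^{\epsilon_1 + \epsilon_2}$ reassembles the joint probability on the right-hand side for $S'$ in exactly the same form as above, delivering the claimed inequality.

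The argument is short because the only nontrivial content is the conditioning decomposition, so I expect the main (minor) obstacle to be interpretational rather than technical. Specifically, one must be explicit about what "$\alg_2$ is $\epsilon_2$-differentially private" means when $\alg_2$ takes two arguments: the correct reading for this theorem is that DP holds in the first argument uniformly over the second, treated as public side information. With this convention the two factor bounds apply pointwise inside the sum; without it, one would have to separately handle the fact that the second argument $\alg_1(S)$ itself depends on $S$, which basic composition does not do. Under the standard convention, the proof is essentially the two-line calculation sketched above.
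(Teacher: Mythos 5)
Your proof is correct and is the standard conditioning argument for basic composition; the paper itself gives no proof of this lemma, asserting it as a well-known fact that ``follows immediately from the definition.'' Your explicit treatment of the independence of the two algorithms' internal randomness and of what $\epsilon_2$-differential privacy means for the two-argument algorithm $\alg_2$ (privacy in the first argument, uniformly over the second) supplies exactly the details the paper leaves implicit.
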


The following Lemma is a variant of \emph{parallel composition}~\cite{DBLP:journals/cacm/McSherry10} of differential privacy, applied to strings. It says that if we run independent $\epsilon$-differentially private algorithms on disjoint substrings, then the resulting algorithm is still $\epsilon$-differentially private: 
\begin{lemma}\label{lem:composing_disjiont}
    Let $\alg_1$ and $\alg_2$ be independent $\epsilon$-differentially private pattern matching algorithms and let $S$ be a string. Further, let $[a,b]\subseteq[0,n-1]$ and $[c,d]\subseteq[0,n-1]$ and $[a,b]\cap[c,d]=\emptyset$. Then  algorithm $\alg_3(S):=(\alg_1(S[a,b]),\alg_2(S[c,d]))$ is $\epsilon$-differentially private.
\end{lemma}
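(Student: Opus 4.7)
The plan is to fix any pair of neighbouring strings $S, S'$ differing in a single position $p \in [0,n-1]$ and then split on which window, if any, contains $p$. Since $[a,b] \cap [c,d] = \emptyset$, the position $p$ lies in at most one of the two intervals. If $p$ is in neither, then $S[a,b] = S'[a,b]$ and $S[c,d] = S'[c,d]$, so $\alg_3(S)$ and $\alg_3(S')$ have identical distributions and the claim is trivial. Otherwise, by symmetry I may assume $p \in [a,b]$; then $S[a,b]$ and $S'[a,b]$ differ in exactly one position, i.e.\ they are neighbouring in the sense of the privacy definition, while $S[c,d] = S'[c,d]$.

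For an arbitrary output set $\Out \subseteq \range(\alg_1) \times \range(\alg_2)$, I would use the fact that $\alg_1$ and $\alg_2$ draw their random coins independently to factorise
\begin{equation*}
\Pr(\alg_3(S) \in \Out) \;=\; \sum_{o_1 \in \range(\alg_1)} \Pr(\alg_1(S[a,b]) = o_1) \cdot \Pr\bigl(\alg_2(S[c,d]) \in \Out_{o_1}\bigr),
\end{equation*}
where $\Out_{o_1} := \{o_2 : (o_1, o_2) \in \Out\}$. Because $S[c,d] = S'[c,d]$, the second factor equals $\Pr(\alg_2(S'[c,d]) \in \Out_{o_1})$, and applying the $\epsilon$-differential privacy of $\alg_1$ to each singleton gives $\Pr(\alg_1(S[a,b]) = o_1) \leq e^{\epsilon}\Pr(\alg_1(S'[a,b]) = o_1)$. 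Summing back up and using the same factorisation for $S'$ then yields $\Pr(\alg_3(S) \in \Out) \leq e^{\epsilon}\Pr(\alg_3(S') \in \Out)$, which is the required inequality.

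This is essentially McSherry's parallel composition argument specialised to two disjoint windows of a common string; the key conceptual point the proof exploits is that modifying a single position of $S$ changes at most one of the two substring inputs, so only one of the two algorithms incurs any privacy loss. I do not foresee a genuine obstacle: the main care is in being explicit about the independence assumption when factorising the joint probability, and in noting that the output ranges of our pattern matching algorithms are discrete so that the sum above is well-defined; otherwise the same argument goes through with a measure-theoretic factorisation in place of the sum.
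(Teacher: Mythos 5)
Your proposal is correct and follows essentially the same route as the paper's proof: the same case split on which interval contains the differing position, followed by factorising the joint distribution via the independence of $\alg_1$ and $\alg_2$ and applying the privacy of only the affected algorithm. The one small difference is that you handle an arbitrary output set $\Out$ by slicing it over the first coordinate, whereas the paper writes $\Out$ as a product set $\Out_1\times\Out_2$; your version is the slightly more careful one, since general events need not be products.
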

\begin{proof}
    Let $S$ and $S'$ be neighbouring strings and let $P$ be a pattern. Let $i$ be the position where $S[i]\neq S'[i]$. Let $\Out=(\Out_1,\Out_2)\subseteq\mathrm{range}(\alg_1)\times\mathrm{range}(\alg_2)=\range(\alg_3)$. If $i\in[a,b]$, then
    \begin{align*}
        \Pr(\alg_3(S,P)\in\Out)&=\Pr((\alg_1(S[a,b],P),\alg_2(S[c,d],P))\in(\Out_1,\Out_2))\\&=\Pr(\alg_1(S[a,b],P)\in \Out_1)\cdot\Pr(\alg_2(S[c,d],P)\in \Out_2)
       \\& \leq e^{\epsilon}\cdot\Pr(\alg_1(S'[a,b],P)\in\Out_1)\cdot\Pr(\alg_2(S'[c,d],P)\in \Out_2)\\
       &=e^{\epsilon}\cdot\Pr(\alg_3(S',P)\in\Out)
    \end{align*}
    since $\alg_1$ is $\epsilon$-differentially private and $S[c,d]=S'[c,d]$. The argument for when $i\in [c,d]$ is symmetric. If $i\notin [a,b]\cup[c,d]$, then the output distributions of $S$ and $S'$ are equal.
\end{proof}


\section{Upper bounds}\label{sec:upperbounds}
In this section we present our differentially private algorithms for the existence, counting and reporting variants of the approximate pattern matching problem.
\subsection{The Sparse Vector Technique for Approximate Pattern Matching}

\begin{algorithm}[t]
\SetAlgoLined
\DontPrintSemicolon \setcounter{AlgoLine}{0}
\caption{BelowThresh for Approximate Pattern Matching}
\label{alg:BT}
\KwInput{string $S$, pattern $P$, threshold $\thresh$, privacy parameter $\epsilon$}
\KwOutput{a position in $S$ or $\infty$}
$m\leftarrow |P|$\;
$\widetilde{\thresh}=\thresh+\Lap(2/\epsilon)$\;
\For{$i \in [0,|S|-m]$}{
    $d_i=\disth(S[i,i+m-1],P)$\;
    $\tilde{d_i}=d_i+\Lap(4/\epsilon)$\;
    \If{$\tilde{d_i}\leq \widetilde{\thresh}$}{
        \textbf{output} $i$\;
        \textbf{terminate}\;
    }
}
\textbf{output} $\infty$\;
\end{algorithm}
Let $\Lap(b)$ denote a random variable drawn from the Laplace distribution with mean $0$ and scale $b$ as given in Definition \ref{def:laplace}. Note that Fact~\ref{lem:Laplacemech} gives a simple algorithm to compute the Hamming distance between $S[i,i+m-1]$ and $P$, for any fixed $i$: Since the sensitivity of $\disth(S[i,i+m-1],P)$ is 1, we can add Laplace noise scaled with $1/\epsilon$, and this gives an additive error of $O(\ln(1/\beta)/\epsilon)$ with probability $1-\beta$~\cite{DBLP:journals/fttcs/DworkR14}. However, if we would apply the Laplace mechanism to compute $\disth(S[i,i+m-1],P)$ for \emph{all} $i\in [0,n-m]$, then, since changing one position in $S$ changes up to $m$ of the values of $\disth(S[i,i+m-1],P)$, the sensitivity is $m$. This results in an additive error of $O((m/\epsilon)\ln(1/\beta))$ with probability $1-\beta$. Thus, the Laplace mechanism directly applied to this problem is no better than the trivial algorithm of outputting all length-$m$ substrings. Instead, we use a variant of the sparse vector technique (based on an algorithm in \cite{DBLP:conf/stoc/DworkNRRV09} and formally described in \cite{DBLP:journals/fttcs/DworkR14}), which allows to decide for many queries of sensitivity 1 whether the output is above (or in our case, below) a certain threshold, with an error \emph{logarithmic} in the number of queries. Our algorithm for the existence version of the approximate pattern matching problem is given in Algorithm \ref{alg:BT}. The following two facts follow immediately from \cite{DBLP:journals/fttcs/DworkR14}, chapter 3.6:
\begin{lemma}\label{lem:sv_privacy}
    Algorithm~\ref{alg:BT} is $\epsilon$-differentially private.
\end{lemma}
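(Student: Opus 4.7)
The plan is to present this as a direct application of the standard sparse vector technique --- specifically its ``below threshold'' variant --- since Algorithm~\ref{alg:BT} is exactly the SVT template applied to the queries $d_i := \disth(S[i, i+m-1], P)$ with the canonical noise scales ($2/\epsilon$ on the threshold and $4/\epsilon$ on each query). So the real content is just to verify that the inputs satisfy the hypothesis of the SVT privacy analysis, and then invoke (or rederive) it.

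First I would check the only property of the queries that SVT needs: each $d_i$ has sensitivity~$1$. If $S$ and $S'$ differ in a single position, then for each fixed $i$ the values $d_i$ and $d'_i$ differ by at most~$1$. This is true \emph{individually} for each $i$ even though a single position flip in $S$ can change up to $m$ of the $d_i$'s simultaneously --- SVT pays for per-query sensitivity, not for the number of queries affected, which is the whole point of using it here.

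Next I would execute the standard coupling. Fix neighbors $S, S'$ and a potential output $i^* \in [0, n-m]$. Write $\rho \sim \Lap(2/\epsilon)$ for the threshold noise and $\nu_i \sim \Lap(4/\epsilon)$ for the query noises in the execution on $S$. For the execution on $S'$, couple the randomness as $\rho' = \rho - 1$, $\nu'_i = \nu_i$ for $i \neq i^*$, and $\nu'_{i^*} = \nu_{i^*} - 2$. Using $|d'_i - d_i| \leq 1$, a short case check shows that for every $i < i^*$ the inequality $d'_i + \nu'_i > \thresh + \rho'$ is preserved (the algorithm does not stop too early on $S'$) and that $d'_{i^*} + \nu'_{i^*} \leq \thresh + \rho'$ is preserved (it does stop at $i^*$). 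Hence the coupled execution on $S'$ outputs $i^*$ as well. Because $\rho$ is shifted by $1$ against scale $2/\epsilon$ and $\nu_{i^*}$ is shifted by $2$ against scale $4/\epsilon$, the Laplace density ratio accumulated by the change of variables is $e^{\epsilon/2} \cdot e^{\epsilon/2} = e^{\epsilon}$. Integrating over the untouched noises $\{\nu_i\}_{i \neq i^*}$ and summing over $i^*$ (with the case $i^* = \infty$ requiring only the shift of $\rho$) yields $\Pr[\alg(S,P) \in \Out] \leq e^{\epsilon} \Pr[\alg(S',P) \in \Out]$ for every $\Out \subseteq \range(\alg)$.

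The main subtlety --- and the step I expect to be the main obstacle to write up cleanly --- is explaining why a single, uniform shift of $\rho$ is enough to simultaneously restore the ``still above threshold'' conditions for all of the (up to $m$) indices $i < i^*$ whose $d_i$ value has changed. A priori one might fear the privacy cost scales with the number of affected queries; the reason it does not is that in the coupling above we only ever pay for the two random variables $\rho$ and $\nu_{i^*}$, while all other $\nu_i$ are kept shared across the two executions, and the shift of $\rho$ by $1$ is large enough to absorb every one of the per-query shifts at once. Once this point is stated, the argument is essentially the one-page SVT proof from~\cite{DBLP:journals/fttcs/DworkR14}, and I would simply reference it for the remaining routine integration.
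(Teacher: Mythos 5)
Your proof is correct and takes the same route as the paper, which simply observes that the lemma ``follows immediately from'' the sparse vector technique analysis in Dwork--Roth, Chapter 3.6; you have spelled out exactly that argument (per-query sensitivity $1$, the coupling $\rho'=\rho-1$, $\nu'_{i^*}=\nu_{i^*}-2$, and the resulting $e^{\epsilon/2}\cdot e^{\epsilon/2}$ density ratio), correctly adapted to the below-threshold variant. No gaps.
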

\begin{lemma}\label{lem:sv_acc}
    The output of Algorithm~\ref{alg:BT} fulfills the following properties with probability $1-\beta$ and $\alpha=8\epsilon^{-1}(\ln(|S|-|P|+1)+\ln(2/\beta))$:
    \begin{enumerate}
        \item If Algorithm~\ref{alg:BT} outputs an index $i$, then $\disth(S[i,i+m-1],P)\leq \thresh+\alpha$,
        \item If $i$ satisfies $\disth(S[i,i+m-1],P)\leq \thresh-\alpha$ and Algorithm~\ref{alg:BT} does not terminate before round $i$, then it outputs $i$ and terminates.
    \end{enumerate}
    \end{lemma}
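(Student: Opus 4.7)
The plan is to follow the standard accuracy analysis of the Below‑Threshold variant of the sparse vector technique, as in Dwork–Roth, specialized to our Hamming‑distance queries. Let $N=|S|-|P|+1$ be the number of rounds, let $Z_0\sim\Lap(2/\epsilon)$ be the noise added to the threshold and, for $i\in[0,N-1]$, let $Z_i\sim\Lap(4/\epsilon)$ be the noise added to the $i$‑th query, so that $\widetilde{\thresh}=\thresh+Z_0$ and $\tilde{d_i}=d_i+Z_i$.

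First I would set up the "good noise" events and take a union bound. Using the tail bound $\Pr(|\Lap(b)|\ge t)=e^{-t/b}$ for a Laplace variable, I would define
\[
E_0:\ |Z_0|\le \tfrac{2}{\epsilon}\ln\!\tfrac{2}{\beta},\qquad
E_i:\ |Z_i|\le \tfrac{4}{\epsilon}\ln\!\tfrac{2N}{\beta}\ \text{for each }i\in[0,N-1].
\]
Then $\Pr(\neg E_0)\le\beta/2$ and $\Pr(\neg E_i)\le\beta/(2N)$, so by a union bound all of $E_0,E_1,\dots,E_{N-1}$ hold simultaneously with probability at least $1-\beta$. On this good event, for every $i$,
\[
|Z_0|+|Z_i|\le \tfrac{2}{\epsilon}\ln\!\tfrac{2}{\beta}+\tfrac{4}{\epsilon}\bigl(\ln N+\ln\!\tfrac{2}{\beta}\bigr)\le \tfrac{8}{\epsilon}\bigl(\ln N+\ln\!\tfrac{2}{\beta}\bigr)=\alpha,
\]
which is the quantity claimed in the statement.

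Next I would derive the two accuracy properties by direct calculation, conditioning on the good event. For property (1), if the algorithm outputs some index $i$, then by the test in the loop $\tilde{d_i}\le\widetilde{\thresh}$, i.e.\ $d_i+Z_i\le\thresh+Z_0$, which yields $d_i\le\thresh+(Z_0-Z_i)\le\thresh+|Z_0|+|Z_i|\le\thresh+\alpha$. For property (2), suppose the loop reaches round $i$ (i.e.\ the algorithm has not yet terminated) and $d_i\le\thresh-\alpha$. Then
\[
\tilde{d_i}-\widetilde{\thresh}=(d_i-\thresh)+(Z_i-Z_0)\le -\alpha+|Z_i|+|Z_0|\le 0,
\]
so the test triggers in round $i$ and the algorithm outputs $i$ and terminates. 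The two bullets in Lemma~\ref{lem:sv_acc} follow.

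There is no real obstacle: the argument is essentially a translation of the accuracy analysis of the sparse‑vector Below‑Threshold mechanism to our specific queries $d_i=\disth(S[i,i+m-1],P)$, whose sensitivity is $1$ (so that scale $4/\epsilon$ indeed matches the $2\Delta/\epsilon$ query‑noise scale of the standard analysis, while the threshold uses scale $2/\epsilon$). The only care I would take is to keep the constants so that the sum $\tfrac{2}{\epsilon}\ln(2/\beta)+\tfrac{4}{\epsilon}\ln(2N/\beta)$ fits inside the stated $\alpha=8\epsilon^{-1}(\ln(|S|-|P|+1)+\ln(2/\beta))$, as shown above.
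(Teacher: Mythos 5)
Your proof is correct and is exactly the standard Below--Threshold accuracy analysis that the paper itself invokes by citing Dwork--Roth, Chapter~3.6, rather than writing out: the good-event union bound and the constant bookkeeping $\tfrac{4}{\epsilon}\ln N+\tfrac{6}{\epsilon}\ln\tfrac{2}{\beta}\le\alpha$ both check out, and the two properties follow from the termination test as you state. The only nit is in your closing aside, where the query-noise scale of the standard mechanism is $4\Delta/\epsilon$ (not $2\Delta/\epsilon$); this is irrelevant to the accuracy argument itself.
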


\begin{corollary}\label{cor:existence}
There exists an $\epsilon$-dp algorithm solving the existence variant of $k$-approximate pattern matching with one-sided additive error $\alpha=16\epsilon^{-1}(\ln(|S|-|P|+1)+\ln(2/\beta))$ with probability $1-\beta$.
\end{corollary}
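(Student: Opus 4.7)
The plan is to invoke Algorithm~\ref{alg:BT} with a carefully chosen threshold and then interpret its output as a YES/NO answer for the existence variant. Concretely, I would set $\alpha' = 8\epsilon^{-1}(\ln(|S|-|P|+1)+\ln(2/\beta))$ and run Algorithm~\ref{alg:BT} on inputs $S$, $P$, privacy parameter $\epsilon$, and threshold $\thresh = k + \alpha'$. If the algorithm returns an index $i$, I output YES together with witness $i$; if it returns $\infty$, I output NO. The total additive error of the resulting scheme will turn out to be $\alpha = 2\alpha'$, which matches the bound claimed in the corollary.

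For correctness I would condition on the high-probability event from Lemma~\ref{lem:sv_acc}. Suppose first that the input is a YES-instance, i.e., there exists some $i^{*}$ with $\disth(S[i^{*},i^{*}+m-1],P)\leq k = \thresh-\alpha'$. By property~2 of Lemma~\ref{lem:sv_acc}, either Algorithm~\ref{alg:BT} has already terminated at some earlier round, or it terminates in round $i^{*}$; in either case it outputs some index $j$. By property~1, this $j$ satisfies $\disth(S[j,j+m-1],P)\leq \thresh+\alpha' = k+2\alpha' = k+\alpha$, so $j$ is a valid witness with the required additive slack. Conversely, if every position $i$ has $\disth(S[i,i+m-1],P) > k+\alpha = \thresh+\alpha'$, then property~1 forces Algorithm~\ref{alg:BT} to return $\infty$, since any returned index would violate the property; hence we correctly output NO. The intermediate regime where the minimum true distance lies in $(k, k+\alpha]$ is unconstrained by the problem definition, so either answer is acceptable there.

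For privacy I would appeal directly to Lemma~\ref{lem:sv_privacy}: Algorithm~\ref{alg:BT} is $\epsilon$-differentially private, and our wrapper --- mapping an index output to ``YES together with that index'' and $\infty$ to NO --- is a deterministic post-processing of its output, which preserves differential privacy. I do not anticipate any genuine obstacle; the only substantive step is the bookkeeping that confirms the threshold $\thresh=k+\alpha'$ yields a final additive error of exactly $2\alpha' = 16\epsilon^{-1}(\ln(|S|-|P|+1)+\ln(2/\beta))$ as stated.
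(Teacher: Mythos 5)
Your proposal is correct and takes exactly the paper's approach: the paper's proof is the single line ``Run Algorithm~\ref{alg:BT} with $\thresh=k+8\epsilon^{-1}(\ln(|S|-|P|+1)+\ln(2/\beta))$,'' relying implicitly on Lemmas~\ref{lem:sv_privacy} and~\ref{lem:sv_acc} just as you do. Your write-up merely makes explicit the accuracy bookkeeping and the post-processing argument that the paper leaves to the reader.
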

\begin{proof}
    Run Algorithm \ref{alg:BT} with $\thresh=k+8\epsilon^{-1}(\ln(|S|-|P|+1)+\ln(2/\beta))$.
\end{proof}


\subsection{Counting and Reporting}

We will distinguish between different cases, depending on whether $P$ is close to a periodic string with a small period or not. 
We use the following Lemma by Charalampopoulos~et~al.~\cite{conf/focs/Charalampopoulos20}:
\begin{lemma}[Theorem III.1 in \cite{conf/focs/Charalampopoulos20}]\label{lem:twocases}
Given a pattern $P$ of length $m$, a string $S$ of length $n$, and a threshold $k\in[1,\dots,m]$, at least one of the following hold:
\begin{enumerate}
\item  \label{item:twocases_np} The number of $k$-mismatch occurrences is bounded by $576\cdot n/m \cdot k$.\label{case:non-periodic}
\item \label{item:twocases_p} There exists a (primitive) string $Q$ of length $|Q|\leq \frac{m}{128k}$ that satisfies $\disth(Q^{\infty}[0,m-1],P)\leq 2k$.\label{case:periodic}
\end{enumerate}
\end{lemma}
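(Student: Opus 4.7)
The plan is to prove the contrapositive: assuming case~\ref{item:twocases_p} fails, i.e.\ no primitive $Q$ with $|Q|\leq m/(128k)$ satisfies $\disth(Q^\infty[0,m-1],P)\leq 2k$, I would bound the number of $k$-mismatch occurrences of $P$ in $S$ by $576\cdot n/m\cdot k$.

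First I would reduce to a local statement by partitioning the positions $[0,n-m]$ into at most $\lceil 2n/m\rceil$ buckets of length $m/2$. If the total count exceeds $576\cdot n/m\cdot k$, then some bucket contains more than $288k$ occurrences; fix such a bucket and list its occurrence positions as $i_1<\cdots<i_t$ with $t>288k$ and $i_t-i_1<m/2$. By pigeonhole some gap $d=i_{j+1}-i_j$ satisfies $d\leq (i_t-i_1)/(t-1)<m/(576k)$, strictly less than $m/(128k)$. The two $k$-mismatch occurrences $S[i_j,i_j+m-1]$ and $S[i_{j+1},i_{j+1}+m-1]$ share the substring $S[i_{j+1},i_j+m-1]$ of length $m-d$; applying the triangle inequality on this shared block yields $\disth(P[0,m-d-1],P[d,m-1])\leq 2k$, so $P$ has an approximate period of length $d$ with at most $2k$ mismatches.

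The main obstacle is upgrading this ``local'' approximate periodicity into global closeness of $P$ to $Q^\infty[0,m-1]$ for some $Q$ of length at most $d$. The naive choice $Q=P[0,d-1]$ only gives a mismatch bound of order $mk/d$, because a single violation of $P[i]=P[i+d]$ propagates along the entire $d$-periodic chain. To obtain the stronger $2k$ bound I would leverage the fact that the window contains not just two but $\Omega(k)$ approximate occurrences: combining multiple small gaps via an approximate analogue of the Fine--Wilf theorem lets one refine the candidate period repeatedly until reaching a $Q$ whose periodic extension is within Hamming distance $2k$ of $P$. Ensuring primitivity then comes for free: if the resulting $Q$ equals $R^\ell$ for a primitive $R$, replace $Q$ by $R$, which only shortens the period and preserves the mismatch bound. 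Such a $Q$ with $|Q|\leq m/(128k)$ contradicts the hypothesized failure of case~\ref{item:twocases_p}, so case~\ref{item:twocases_np} must hold, completing the contrapositive.
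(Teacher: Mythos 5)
First, a point of context: the paper does not prove this lemma at all --- it is imported verbatim as Theorem~III.1 of Charalampopoulos, Kociumaka and Wellnitz \cite{conf/focs/Charalampopoulos20}, so there is no in-paper proof to compare against. Judged on its own, your reduction in the first half is sound and standard: bucketing the starting positions into windows of length $m/2$, pigeonholing to find two $k$-mismatch occurrences at distance $d< m/(576k)$, and deducing $\disth(P[0,m-d-1],P[d,m-1])\le 2k$ from the overlap are all correct steps, and correctly identify that an excess of occurrences forces an \emph{approximate period} of $P$.

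The genuine gap is exactly where you say ``the main obstacle'' is. Having $\disth(P[0,m-d-1],P[d,m-1])\le 2k$ does not give $\disth(P,Q^{\infty}[0,m-1])\le 2k$ for any $Q$ of length $d$; as you note, the naive alignment only gives $O(mk/d)$ because mismatches propagate along the $d$-periodic chains. Your proposed fix --- ``combining multiple small gaps via an approximate analogue of the Fine--Wilf theorem'' and ``refining the candidate period repeatedly'' --- is not an argument but a restatement of the goal. There is no off-the-shelf approximate Fine--Wilf lemma that upgrades pairwise overlap bounds to a global $2k$ bound against a \emph{fixed} periodic string; approximate periods do not compose the way exact periods do, and controlling the accumulated loss is precisely the content of the cited theorem. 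The actual proof in \cite{conf/focs/Charalampopoulos20} proceeds quite differently: it analyzes the pattern itself, either extracting $\Theta(k)$ disjoint aperiodic ``breaks'' or disjoint ``repetitive regions'' (each of which separately caps the number of occurrences per window, yielding the $576\cdot n/m\cdot k$ bound), or else certifying that $P$ is within Hamming distance $2k$ of $Q^{\infty}$ for a short primitive $Q$; the constant slack between your $m/(576k)$ and the target $m/(128k)$ is consumed by that case analysis, not by iterating a period-refinement step. As written, your argument establishes local approximate periodicity but does not establish either disjunct of the lemma, so the proof is incomplete at its decisive step.
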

Note that in our privacy definition, only $S$ needs to be private, so we can compute whether case~\ref{item:twocases_p} holds for $P$ without losing any privacy. An example of an algorithm computing this is given in Lemma~\ref{lem:runtime} in Appendix \ref{sec:runtime}. First, we will consider the case where the pattern $P$ is close to a periodic string with small period, and show that in that case, there is a solution to the reporting problem achieving constant multiplicative error and asymptotically optimal additive error. We will call the different cases the ``periodic" and the ``non-periodic" case - note that this is not entirely accurate, since the condition says that $P$ is \emph{close} to a periodic string with \emph{small} period. Thus, $P$ can be aperiodic in the periodic case, and $P$ can be periodic, but with a large period, in the non-periodic case.
\subsubsection{The periodic case}
First, we consider the case where a stronger version of condition \ref{item:twocases_p} in Lemma~\ref{lem:twocases} is true for pattern $P$. In this case we show how to solve the reporting version of the approximate pattern matching problem with constant multiplicative and asymptotically optimal additive error, while satisfying $\epsilon$-differential privacy. We need the following result by Charalampopoulos~et~al.~\cite{conf/focs/Charalampopoulos20}:

\begin{lemma}[Theorem I.7 in \cite{conf/focs/Charalampopoulos20}]\label{lem:per}
Let $P$ denote a pattern of length $m$, let $T$ denote a text of length $n\leq \frac{3m}{2}$, and let $K\in[0,\dots,m]$ denote a threshold. Suppose that both $T[0,m-1]$ and $T[n-m,n-1]$ are $K$-mismatch occurrences of $P$. If there is a positive integer $d\geq 2K$ and a primitive string $Q$ with $|Q|\leq m/(8d)$ and $\disth(P,Q^{\infty}[0,m-1])\leq d$, then each of the following holds:
\begin{enumerate}
\item The string $T$ satisfies $\disth(T,Q^{\infty}[0,n-1])\leq 3d$.
\item Every $K$-mismatch occurrence of $P$ in $T$ starts at a position that is a multiple of $|Q|$.
\item The set of all $K$-mismatch occurrences of $P$ in $T$ can be decomposed into $O(d^2)$ arithmetic progressions with difference $|Q|$.
\end{enumerate}
\end{lemma}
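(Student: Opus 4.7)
The plan is to attack the three conclusions sequentially, with the primitivity of $Q$ as the central rigidity tool: any shift of $Q^{\infty}$ by a non-multiple of $q := |Q|$ must disagree with itself on at least one position in every length-$q$ block. Write $\Pi_P := \{i \in [0, m-1] : P[i] \neq Q^{\infty}[i]\}$, so $|\Pi_P| \leq d$. By the triangle inequality, both $T[0, m-1]$ and $T[n-m, n-1]$ are each within Hamming distance $K + d$ of $Q^{\infty}[0, m-1]$.

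For part 1, I would first show that the shift $s := n - m$ must be a multiple of $q$. Consider the overlap region $[n-m, m-1]$, of length $L := 2m - n \geq m/2$ (using $n \leq 3m/2$). On this overlap, each index $i$ yields two constraints: $T[i]$ is close to $Q^{\infty}[i]$ from the occurrence at position $0$, and to $Q^{\infty}[i - s]$ from the occurrence at position $n-m$ (after re-indexing). Since a single character can agree with at most one of two distinct symbols, the number of positions $i$ in the overlap with $Q^{\infty}[i] \neq Q^{\infty}[i-s]$ is bounded by the total $T$-mismatch count against the two references, i.e., by $2(K + d) \leq 3d$ (using $K \leq d/2$). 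On the other hand, if $q \nmid s$, primitivity of $Q$ guarantees at least one disagreement per length-$q$ block, giving at least $\lfloor L/q \rfloor \geq 4d$ such positions (since $q \leq m/(8d)$), a contradiction. Hence $q \mid s$ and $Q^{\infty}[n-m, n-1] = Q^{\infty}[0, m-1]$; summing at most $K + d$ mismatches on $[0, m-1]$ with those on $[m, n-1]$ then yields $\disth(T, Q^{\infty}[0, n-1]) \leq 2(K + d) \leq 3d$.

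Part 2 reuses the same template. For any $K$-mismatch occurrence of $P$ at position $j \in [0, n-m]$, part 1 and the triangle inequality give $\disth(Q^{\infty}[j, j+m-1], Q^{\infty}[0, m-1]) \leq 3d + K + d \leq 5d$. If $j$ were not a multiple of $q$, primitivity of $Q$ would force at least $\lfloor m/q \rfloor \geq 8d$ mismatches between these two length-$m$ windows of $Q^{\infty}$, a contradiction.

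Part 3 is the main obstacle and requires a more refined combinatorial analysis. By part 2, the admissible starting positions are $\{jq : j \in J\}$ for some $J \subseteq \mathbb{Z}_{\geq 0}$, so it suffices to show that $J$ is a union of $O(d^2)$ integer intervals. The plan is to study $f(j) := \disth(T[jq, jq+m-1], P)$ via the mismatch sets $\Pi_P$ and $\Pi_T := \{i : T[i] \neq Q^{\infty}[i]\}$, which by part 1 has $|\Pi_T| \leq 3d$. Because $jq$ is a multiple of $q$, a position $l \in [0, m-1]$ contributes to $f(j)$ only when $l \in \Pi_P$ or $jq + l \in \Pi_T$; writing $f(j)$ as a sum over these $O(d)$-sized sets reveals that its baseline $|\Pi_T \cap [jq, jq+m-1]| + |\Pi_P|$ is piecewise constant in $j$ with $O(|\Pi_T|) = O(d)$ breakpoints (one for each entry or exit of an element of $\Pi_T$ from the window), while the correction from doubly deviating positions is nonzero only at those $j$ for which some element of $\Pi_T$ aligns with some element of $\Pi_P$, i.e., at $O(|\Pi_T| \cdot |\Pi_P|) = O(d^2)$ isolated values. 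Combining these, the level set $\{j : f(j) \leq K\}$ decomposes into $O(d^2)$ integer intervals, corresponding in $T$ to arithmetic progressions with common difference $q = |Q|$.
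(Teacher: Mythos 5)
Your proposal is correct, but note that the paper itself offers no proof of this statement: it is imported verbatim as Theorem I.7 of the cited work of Charalampopoulos, Kociumaka and Wellnitz and used as a black box. What you have written is therefore a self-contained reconstruction rather than an alternative to anything in this paper, and it follows the same structural philosophy as the original source: measure everything against the approximate period $Q^{\infty}$, and use primitivity to force rigidity. All three parts check out. In part 1 the counting is right ($2(K+d)\leq 3d$ disagreement positions versus at least $\lfloor L/q\rfloor\geq 4d$ forced by primitivity), and the reduction of $\disth(T,Q^{\infty}[0,n-1])$ to $2(K+d)\leq 3d$ after establishing $q\mid(n-m)$ is clean. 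Part 2 is the same template with $4d+K\leq 5d$ versus $8d$. Part 3 is the only genuinely delicate step, and your decomposition $f(j)=|A_j|+|\Pi_P|-\mathrm{correction}_j$ with $A_j=\{l: jq+l\in\Pi_T\}$ works: the baseline is piecewise constant with $O(|\Pi_T|)=O(d)$ breakpoints, the correction is nonnegative and supported on the $O(d^2)$ values of $j$ realizing an alignment $t=jq+p$ with $(t,p)\in\Pi_T\times\Pi_P$, and since the correction only decreases $f$, the sublevel set $\{j:f(j)\leq K\}$ is a union of $O(d)$ full baseline pieces plus $O(d^2)$ singletons. The one place a referee would ask for a line more is the repeated appeal to "primitivity gives one disagreement per length-$q$ block": this needs the standard Fine--Wilf argument that agreement of $Q^{\infty}$ with its shift by $s'\not\equiv 0\pmod q$ on $q$ consecutive positions would give a rotation of $Q$ a period $\gcd(q,s')<q$, contradicting primitivity. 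With that spelled out, the proof is complete.
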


The main idea of our algorithm is now the following: first, we divide $S$ into substrings of length at most $\frac{3m}{2}$. Then for each such substring $T$, we run two instances of Algorithm \ref{alg:BT}, one for $T$ and $P$, and one for their reverse strings. If both instances output an occurrence, then with good probability, a substring of $T$ fulfills the conditions of Lemma~\ref{lem:per} for a suitable value of $K\geq k$, and we can use the Lemma to report all occurrences of distance at most $K$. Else, we know by the properties of Algorithm \ref{alg:BT} that with good probability, there are no occurrences of distance at most $k$ in $T$. The details are given in the proof of the following theorem:
\begin{theorem}\label{thm:periodic}
Let $P$ be a pattern of length $m$. Assume that there exists a primitive string $Q$ of length $|Q|\leq \frac{m}{32C}$ with $C=\max(k, \frac{96(\ln n + \ln (6/\beta))}{\epsilon})$ 
that satisfies $\disth(P,Q^{\infty}[0,m-1])\leq 2k$. 
Then there exists an $\epsilon-$differentially private algorithm for the reporting version of the $k$-approximate pattern matching problem, that given a string $S$ of length $n\geq m$ outputs a set $I\subseteq[0,n-m]$ such that with probability $1-\beta$ the following two conditions are fulfilled: 
\begin{enumerate}
    \item If $\disth(P,S[i,i+m-1])\leq k$, then $i\in I$;
    \item If $i\in I$, then $\disth(P,S[i,i+m-1])\leq (1+\gamma)k+\alpha,$
\end{enumerate}
where $\gamma=7$ and $\alpha=6\cdot  \frac{96(\ln n + \ln (6/\beta))}{\epsilon}$.
\end{theorem}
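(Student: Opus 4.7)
The plan is to cover $S$ by overlapping length-$3m/2$ windows starting at every multiple of $m/2$, so that every length-$m$ substring of $S$ lies entirely inside at least one window. On each window $T$ I would run Algorithm~\ref{alg:BT} twice with per-instance privacy budget $\epsilon_0=\epsilon/6$ and threshold $\thresh=k+\alpha_0$, where $\alpha_0$ is the accuracy parameter of Lemma~\ref{lem:sv_acc}: once on $(T,P)$ to get the leftmost noisy match $i_L$, and once on $(\rev T,\rev P)$ to get (after reindexing) the analogous rightmost match $i_R$ in $T$. If both instances succeed, I would apply Lemma~\ref{lem:per} to the subword $T'=T[i_L,i_R+m-1]$ and report, in terms of $S$, every position $s+i_L+j|Q|$ with $0\le j|Q|\le i_R-i_L$, where $s$ is the start of the window in $S$ and $Q$ is the primitive string from the hypothesis. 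Since $Q$ depends only on $P$ (which is public), this last step costs no privacy.

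\textbf{Privacy.} Consecutive windows overlap, so Lemma~\ref{lem:composing_disjiont} does not apply directly. I would partition the windows into three classes by starting position modulo $3m/2$; within a class the windows are pairwise disjoint, so an inductive use of Lemma~\ref{lem:composing_disjiont} makes the parallel forward pass $\epsilon_0$-DP, and Lemma~\ref{fact:composition_theorem} adds a factor of two for the reverse pass. Sequentially composing the three classes then yields $6\epsilon_0=\epsilon$-DP overall, and the Lemma~\ref{lem:per} step is pure post-processing.

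\textbf{Utility.} I would set the per-instance failure probability $\beta_0=\beta m/(4n)$ and union-bound over the $\le 4n/m$ invocations to obtain $\alpha_0=O(\ln(n/\beta)/\epsilon)$. By Lemma~\ref{lem:sv_acc}, every true $k$-mismatch occurrence causes both instances on its window to return witnesses that bracket it and that are themselves $K$-mismatch occurrences for $K:=k+2\alpha_0$. I then instantiate Lemma~\ref{lem:per} on $T'$ with this $K$ and $d:=2K$; a case analysis on whether $k$ or $96\ln(n/\beta)/\epsilon$ dominates in $C$ shows that $|Q|\le m/(32C)$ already forces $|Q|\le m/(8d)$. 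The lemma then guarantees that every $K$-mismatch (and in particular every $k$-mismatch) occurrence in $T'$ starts at a multiple of $|Q|$, hence is reported, and that $\disth(T',Q^\infty[0,|T'|-1])\le 3d$, whence every reported position satisfies
\[
\disth\le 3d+\disth(Q^\infty[0,m-1],P)\le 3d+2k=8k+12\alpha_0\le 8k+\alpha.
\]

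\textbf{Main obstacle.} The technical crux is the tight constant bookkeeping. The Lemma~\ref{lem:per} error contributes $3d=6k+12\alpha_0$, while $\alpha_0$ is already inflated by the factor six of the privacy split, so to land at multiplicative error exactly $7$ one must spend the one-sided error of Algorithm~\ref{alg:BT} only once each on the lower bound (to catch $k$-mismatches) and on the upper bound (to bound $K$). The choice $C=\max(k,96\ln(n/\beta)/\epsilon)$ and the hypothesis $|Q|\le m/(32C)$ are precisely what make the above $d=2K$ admissible in Lemma~\ref{lem:per} and fit the result within the claimed $(1+7)k+\alpha$ budget; losing anywhere in this chain degrades either the assumption on $|Q|$ or the guaranteed error.
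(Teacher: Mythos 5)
Your proposal is correct and follows essentially the same route as the paper: the same $3m/2$-length overlapping windows at multiples of $m/2$, the same double run of Algorithm~\ref{alg:BT} (forward and reversed) with budget $\epsilon/6$ and threshold $k+\alpha_0$, the same instantiation of Lemma~\ref{lem:per} with $K=k+2\alpha_0$ and $d=2K$, and the same $3d+2k=8k+12\alpha_0$ bookkeeping yielding $\gamma=7$. Your three-class partition of the windows is just a slightly more explicit phrasing of the paper's observation that each position of $S$ lies in at most three windows.
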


\begin{proof}
First, we compute a $Q$ satisfying the condition above. Note that we can do unlimited computation on $P$ without violating privacy. An algorithm for computing $Q$ is given in Lemma~\ref{lem:runtime} in Appendix~\ref{sec:runtime}. 
Then,  we divide the string $S$ into overlapping strings of length at most $\lfloor \frac{3m}{2}\rfloor-1=(m-1)+\lfloor \frac{m}{2}\rfloor$. 
We define $\mathcal{F}=\{S[j\cdot \lfloor m/2 \rfloor, j\cdot\lfloor m/2\rfloor +\lfloor 3m/2\rfloor -2], 0\leq j \leq \lfloor\frac{n-m}{\lfloor m/2\rfloor}\rfloor -1\}\cup [\lfloor\frac{n-m}{\lfloor m/2\rfloor}\rfloor \lfloor m/2 \rfloor, n-1]$. Note that any two strings in $\mathcal{F}$ overlap by at most $m-1$ and $\mathcal{F}$ covers $[0,n-1]$. Thus, any occurrence of $P$ in $S$ is included in exactly one string $T\in\mathcal{F}$. Further, any position in $S$ is in at most 3 strings in $\mathcal{F}$, and $|\mathcal{F}|\leq n/\lfloor m/2\rfloor\leq 3n/m$.  For every string $T=S[a,b]\in \mathcal{F}$, we run Algorithm \ref{alg:periodic} and return all positions in $a+I$, where $I$ is the set returned by Algorithm \ref{alg:periodic} on inputs $(T,P,|Q|,k,n,m,\epsilon)$. 

 \begin{algorithm}[t]
\SetAlgoLined
\DontPrintSemicolon \setcounter{AlgoLine}{0}
\caption{Reporting Approximate Pattern Matching, periodic case}
\label{alg:periodic}
\KwInput{string $T$, pattern $P$, $|Q|$, $k$, $n$, $m$, $\epsilon$}
\KwOutput{a set $I$ of positions in $T$}
$\thresh=k+\epsilon^{-1}48(\ln (m/2) + \ln(12(n/m)/\beta))$\;
$\epsilon'=\epsilon/6$\;
$i\leftarrow$ output of Algorithm \ref{alg:BT} on input (string $T$, pattern $P$, threshold $\thresh$, privacy parameter $\epsilon'$)\;
$j'\leftarrow$ output of Algorithm \ref{alg:BT} on input (string $\rev{T}$, pattern $\rev{P}$, threshold $\thresh$, privacy parameter $\epsilon'$)\;
    \If{$j'=\infty$ \bf{or} $i=\infty$}{\textbf{output $\emptyset$}\; \textbf{terminate}}
$j=(|T|-1)-j'-(m-1)$ \tcp*{translate starting position in $\rev{T}$ to starting position in $T$}
\bf{output} $I=\{i+\ell|Q|,0\leq \ell \leq \lfloor \frac{j-i}{|Q|} \rfloor\}$

\end{algorithm}

{\bf Privacy analysis.} Note that in every instance of Algorithm~\ref{alg:periodic}, we run two instances of Algorithm~\ref{alg:BT} with privacy parameter $\epsilon/6$. By Lemma~\ref{lem:sv_privacy} and Fact~\ref{fact:composition_theorem}, Algorithm~\ref{alg:periodic} is $\epsilon/3$-differentially private. Further, let $S$ and $S'$ differ in position $i^{*}$. Since $i^{*}$ can only be in at most three strings in $\mathcal{F}$, the full algorithm on $S$ satisfies $\epsilon$-differential privacy by Fact~\ref{fact:composition_theorem} and Lemma~\ref{lem:composing_disjiont}.

{\bf Accuracy analysis.} Fix a $T$ in $\mathcal{F}$. Let $i$ and $j$ be as in Algorithm \ref{alg:periodic} on input $T$. If $j'$ was set to $\infty$, let $j=-\infty$. Let $\beta'=\beta/(6(n/m))$. Note that by Lemma~\ref{lem:sv_acc}, with probability at least $1-\beta'$, we have for all $i'<i$,
\begin{align}\begin{split}
    \label{condlowerboundi} \disth(T[i',i'+m-1],P)&>k +\epsilon^{-1}48(\ln (m/2) + \ln(12(n/m)/\beta)) - \epsilon^{-1}48(\ln(m/2) + \ln(2/\beta'))\\&=k,\end{split}
\end{align}
and, if $i<\infty$,
\begin{align}
\begin{split}
\label{condupperboundi} \disth(T[i,i+m-1],P)&\leq k + \epsilon^{-1}48(\ln (m/2) + \ln(12(n/m)/\beta)) + \epsilon^{-1}48(\ln(m/2) + \ln(2/\beta'))\\
&= k + \epsilon^{-1}96(\ln(6n/\beta)).
\end{split}
\end{align}
Similarly, also with probability $1-\beta'$, we have
for all $i''>j$,
\begin{align}
    \label{condlowerboundj} \disth(T[i'',i''+m-1],P)>k.
\end{align}
and, if $j>-\infty$,
\begin{align}\label{condupperboundj} \disth(T[j,j+m-1],P)\leq k +  \epsilon^{-1}96(\ln(6n/\beta)).
\end{align}

Thus, with probability $1-\beta/(3(n/m))$, both conditions are true, and since $|\mathcal{F}|\leq 3n/m$, these conditions are true with probability at least $1-\beta$ over all instances of Algorithm \ref{alg:periodic}. In the following, we condition on that.

If either $j'$ or $i$ was set to $\infty$, then there is no occurrence of distance at most $k$ in $T$, and in this case we return the empty set. Next, if $j<i$, then there is also no occurrence of at most $k$ in $T$ by (\ref{condlowerboundi}) and (\ref{condlowerboundj}). Note that also in this case, Algorithm \ref{alg:periodic} returns the empty set. 

Now, consider the case $j\geq i$ for finite integers $i$ and $j$. We want to argue that in this case, the string $T[i,j+m-1]$ fulfills the conditions of Lemma~\ref{lem:per} for an appropriate choice of $K>k$. Obviously, $|T[i,j+m-1]|\leq |T| \leq \frac{3m}{2}$. We set $K=k+\epsilon^{-1}96(\ln(6n/\beta))\leq 2C$. By (\ref{condupperboundi}) and (\ref{condupperboundj}) both $i$ and $j$ are the start of a $K$-mismatch occurrence. Let $d=2K\leq 4C$. By assumption, there is a primitive string $Q$ with $|Q|\leq m/(32 C) \leq m/8d$ with $\disth(P,Q^{\infty}[0,m-1])\leq 2k \leq d$. Thus, the conditions of Lemma~\ref{lem:per} are fulfilled.
 This gives the following:
 \begin{enumerate}
     \item Since the string $T[i,j+m-1]$ satisfies $\disth(T[i,j+m-1],Q^{\infty}[0,j+m-i-1])\leq 3d$, we have that for any position $q=i+\ell|Q|$  for $\ell\in [0,\lfloor \frac{j-i}{|Q|}\rfloor]$:
    \begin{align*}
        \disth(T[q,q+m-1],P)&\leq \disth(T[q,q+m-1],Q^{\infty}[0,m-1])+ \disth(Q^{\infty}[0,m-1],P) \\&\leq 3d+2k=8k+6\cdot 96 \cdot \epsilon^{-1}(\ln(6n/\beta)).
    \end{align*}
    Thus, every reported occurrence $q$ fulfills $\disth(T[q,q+m-1],P)\leq(1+\gamma)k+\alpha$ with $\gamma=7$ and $\alpha=6\cdot 96 \cdot \epsilon^{-1}(\ln(6n/\beta))$.
    \item Since every $K$-mismatch occurrence of $P$ in $T[i,j+m-1]$ starts at a multiple of $|Q|$, then in particular, any $k$-mismatch occurrence of $P$ in $T[i,j+m-1]$ starts at a position $i+\ell|Q|$ in $T$ for $\ell\in [0,\lfloor \frac{j-i}{|Q|}\rfloor]$. Thus, any substring of $T[i,j+m-1]$ of length $m$ that does not start at $i+\ell|Q|$  for some $\ell\in [0,\lfloor \frac{j-i}{|Q|}\rfloor]$ has a distance larger than $k$. 
    \end{enumerate}
    Further, by (\ref{condlowerboundi}) and (\ref{condlowerboundj}), $\disth(T[i',i'+m-1],P)>k$ for all $i'<i$ or $i'>j$. Thus, we report all occurrences with distance at most $k$.
\end{proof}

\subsubsection{The non-periodic case}

    Next, we assume condition \ref{item:twocases_p} in Lemma~\ref{lem:twocases} is not true for $P$, that is, there does not exist a string $Q$ of length $|Q|\leq \frac{m}{128k}$ that satisfies $\disth(Q^{\infty}[0,m-1],P)\leq 2k$. This means the number of $k$-mismatch occurrences in any string $T$ of length $|T|$ is bounded by $576\cdot |T|/m \cdot k$ by Lemma \ref{lem:twocases}.
    In particular, in any substring of length $\leq 2m$ of $S$, the number of occurrences is at most $1152k=O(k)$. We will use this fact to solve the counting variant of the problem in the non-periodic case.
    Note that Theorem~\ref{thm:periodic} and Theorem~\ref{thm:non-periodic} do not cover all the cases: If $k\leq C/4$, where $C$ is as in Theorem~\ref{thm:periodic}, then it is possible that the conditions of neither theorem are fulfilled. We deal with that case later.
    
   \begin{theorem}\label{thm:non-periodic}
   Let $P$ be a pattern of length $m$. If there does not exist a string $Q$ of length $|Q|\leq \frac{m}{128k}$ that satisfies $\disth(Q^{\infty}[0,m-1],P)\leq 2k$, then there exists an $\epsilon-$differentially private algorithm that given a string $S$ of length $n\geq m$ computes a count $c$, such that with probability $1-\beta$ it holds that
   $c_k(S)\leq c \leq c_{(1+\gamma)k}(S)$, where $\gamma=O(\epsilon^{-1}\cdot(\ln n + \ln(1/\beta))$. Further, if $c>0$, it returns a witness $i$ satisfying $\disth(P,S[i, i+m-1])\leq (1+\gamma)k$.
   \end{theorem}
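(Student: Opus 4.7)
My approach combines the combinatorial bound of Lemma~\ref{lem:twocases} with a windowed, multi-report variant of the sparse vector technique already used in Algorithm~\ref{alg:BT}. Under the theorem's hypothesis, case~\ref{item:twocases_p} of Lemma~\ref{lem:twocases} is ruled out, so $c_k(T)\le 576\,|T|/m\cdot k$ for every substring $T$ of $S$; in particular every window of length at most $2m$ contains at most $C_0 := 1152 k$ $k$-mismatch occurrences of $P$. I would cover $[0,n-m]$ by $\lceil n/m\rceil$ overlapping windows $W_j := S[jm,\min((j+2)m-1,n-1)]$, so that every length-$m$ substring $S[i,i+m-1]$ with $i\in[jm,(j+1)m-1]$ lies entirely inside $W_j$ and every character of $S$ belongs to at most two windows.

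For each $W_j$ I would run an independent ``below-threshold'' multi-report generalisation of Algorithm~\ref{alg:BT} with privacy budget $\epsilon/2$, threshold $T=k+\alpha'$ and a cap of $c_{\max}=C_0$ reports, where $\alpha'=\Theta(k\log(n/\beta)/\epsilon)$ is chosen to dominate the per-query Laplace tail after union-bounding over the $O(m)$ starting positions in each window and all $O(n/m)$ windows. The output $c$ is the total number of reports across windows and the witness $i^{*}$ is any reported index (say the first). Privacy follows because each window is $\tfrac{\epsilon}{2}$-differentially private, and since a single flipped position of $S$ affects at most two windows, Lemma~\ref{lem:composing_disjiont} (parallel composition over disjoint windows) combined with Lemma~\ref{fact:composition_theorem} (sequential composition on the at most two windows that share that position) yields $\epsilon$-differential privacy overall.

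For accuracy, I would condition on the high-probability event that all noise terms are bounded by $\alpha'$. Then every $i$ with $d_i\le k$ has $\tilde d_i\le\tilde T$, and because $c_k(W_j)\le C_0=c_{\max}$ the per-window SVT cannot exhaust its report budget on true $k$-mismatch positions before reaching $i$; this gives $c\ge c_k(S)$. Conversely, every reported $i$ satisfies $d_i\le T+\alpha'=k+2\alpha'=(1+\gamma)k$ with $\gamma=2\alpha'/k=O(\log(n/\beta)/\epsilon)$, so $c\le c_{(1+\gamma)k}(S)$ and the witness $i^{*}$ satisfies $\disth(S[i^{*},i^{*}+m-1],P)\le(1+\gamma)k$, as required.

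The main subtlety I anticipate is the ``early-halting'' step in the accuracy argument: with only $c_{\max}=C_0$ allowed reports, one must rule out the scenario where false-positive reports on positions with $d_i\in(k,(1+\gamma)k]$ exhaust the budget before a later, genuine $k$-mismatch position in the same window is reached. I expect to handle this either by invoking Lemma~\ref{lem:twocases} a second time at the larger threshold $k+2\alpha'$ (with a sub-case split on whether $P$ still satisfies the non-periodicity hypothesis at that threshold, combining with the structural Lemma~\ref{lem:per} where it does not) or by inflating $c_{\max}$ to absorb those extra reports at the cost of a further polylogarithmic factor in $\alpha'$. The delicate interplay between the SVT report cap, the Laplace noise scale and the structural count bound is the step I expect to require the most care.
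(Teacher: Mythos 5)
Your construction is the same as the paper's: the same cover of $S$ by $O(n/m)$ overlapping length-$O(m)$ windows with each character in at most two of them, the same capped multi-report use of Algorithm~\ref{alg:BT} with cap $C_0=1152k$ and per-instance budget $\epsilon/(2C_0)$ (whence the $\Theta(k\log(n/\beta)/\epsilon)$ noise scale and $\gamma=O(\epsilon^{-1}\log(n/\beta))$), and the same privacy accounting. The one place your argument is incomplete is exactly the step you flagged: the lower bound $c\geq c_k(S)$ in the case where a window exhausts its report budget. Your sentence ``the per-window SVT cannot exhaust its report budget on true $k$-mismatch positions before reaching $i$'' does not address budget exhaustion on a mix of true and false positives, and neither of your proposed repairs is the right move. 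Re-invoking Lemma~\ref{lem:twocases} at threshold $k+2\alpha'$ is not available to you, because the theorem's hypothesis only rules out a close periodic structure at scale $(2k,\,m/(128k))$, not at the inflated threshold (this mismatch of scales is precisely why the paper needs the separate Lemma~\ref{lem:smallk}). Inflating $c_{\max}$ is circular: the per-instance privacy budget is $\epsilon/(2c_{\max})$, so a larger cap inflates $\alpha'$, which inflates the number of positions with $d_i\in(k,(1+\gamma)k]$ you are trying to absorb.

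The resolution is much simpler and uses nothing beyond what you already have: for the \emph{counting} guarantee you never need the reported positions to be the genuine $k$-mismatch positions. If the window terminates with $c<C_0$ reports, then every starting position was scanned by some SVT instance, every skipped position has distance $>k$, and every reported position has distance $\leq(1+\gamma)k$, so $c_k(W_j)\leq c\leq c_{(1+\gamma)k}(W_j)$ as you argue. If instead the cap is hit, then $c=C_0=1152k\geq c_k(W_j)$ holds \emph{unconditionally} by Lemma~\ref{lem:twocases} applied to the length-$O(m)$ window under the theorem's hypothesis --- regardless of which positions were reported or missed --- while $c\leq c_{(1+\gamma)k}(W_j)$ holds because the $C_0$ reported positions are distinct and each has distance at most $(1+\gamma)k$. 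This is exactly how the paper closes the case split, and it is why the cap is set to the structural bound $1152k$ in the first place. (Do note that this trick is specific to counting; for the reporting variant your worry would be fatal, which is consistent with the paper only claiming a counting result in the non-periodic case. Also make sure each window's SVT only scans the first $m$ starting positions of that window, so that occurrences are not double-counted across overlapping windows.)
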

   \begin{proof}
   The first step is to divide the string $S$ into substrings of length at most $2m-1$, which form overlapping blocks, such that any pattern occurrence appears in exactly one block. That is, we define the set $\mathcal{B}=\{S[jm,(j+2)m-2],j=0\dots\lfloor \frac{n+1}{m} \rfloor-2\}\cup \{S[(\lfloor\frac{n+1}{m} \rfloor-1)m,n-1]\}$. Since $\mathcal{B}$ covers $[0,n-1]$ and two strings overlap by at most $m-1$, any pattern occurrence in $S$ is contained in exactly one string in $\mathcal{B}$. Note that any position in $S$ is included in at most two strings in $\mathcal{B}$.

\begin{algorithm}[t]
\SetAlgoLined
\DontPrintSemicolon \setcounter{AlgoLine}{0}
\caption{Counting Approximate Pattern Matching, non-periodic case}
\label{alg:non-periodic}
\KwInput{string $T$, pattern $P$, $k$, $n$, $m$, $\epsilon$}
\KwOutput{a count $c$ and a position $j$ in $T$}
$j=-1$\;
$i=-1$\;
$c=0$\;
$\thresh=k+\epsilon^{-1}16\cdot 1152 k(\ln m + \ln(2(n/m)1152 k/\beta))$\;
$\epsilon'=\epsilon/(2\cdot 1152k)$\;
\While{$i< |T|-|P|\And c< 1152 k$}{
    $j\leftarrow$ output of Algorithm \ref{alg:BT} on input (string $T[i+1,n-1]$, pattern $P$, threshold $\thresh$, privacy parameter $\epsilon'$)\;
    \If{$j=\infty$}{\textbf{output $(c,i)$}\; \textbf{terminate}}
    $c=c+1$\;
    $i=j$}
    \textbf{output} $(c, j)$
\end{algorithm}

For each $T\in \mathcal{B}$, we run Algorithm~\ref{alg:non-periodic}. Then for the outputs $(c_1,j_1),\dots,(c_{|\mathcal{B}|}, j_{|\mathcal{B}|})$, we output $\sum_{\ell=1}^{|\mathcal{B}|} c_{\ell}$. If there exists a $j_{\ell}>-1$, we choose an arbitrary such and output $\ell m+j_{\ell}$.

\paragraph*{Privacy analysis.} For any instance of Algorithm~\ref{alg:non-periodic}, we run at most $1152 k$ instances of Algorithm~\ref{alg:BT} with privacy parameter $\epsilon'=\epsilon/(2\cdot 1152 k)$. Thus any instance of Algorithm~\ref{alg:non-periodic} is $\epsilon/2$-differentially private by Lemma~\ref{lem:sv_privacy} and Fact~\ref{fact:composition_theorem}. Further, let $S$ and $S'$ differ in position $i^{*}$. Since $i^{*}$ can only be in at most two strings in $\mathcal{B}$, the full algorithm satisfies $\epsilon$-differential privacy by Fact~\ref{fact:composition_theorem} and Lemma~\ref{lem:composing_disjiont}.

\paragraph*{Accuracy analysis.}

Let $c(T)$ be the output of Algorithm~\ref{alg:non-periodic} for string $T\in\mathcal{B}$ and $c_{k}(T)$ the true count of positions $i$ such that $\disth(T[i,i+m-1],P)\leq k$. For a fixed $T$, we will show that $c_{k}(T)\leq c(T) \leq c_{(1+\gamma)k}(T)$ with probability $1-\beta/(n/m)$. Since $|\mathcal{B}|\leq n/m$, a union bound then implies that the bound holds for all $T\in\mathcal{B}$ with probability $1-\beta$. Note that since any substring of length $m$ of $S$ is included in exactly one string in $\mathcal{B}$, this implies $c_k(S)=\sum_{T\in\mathcal{B}}c_k(T)\leq \sum_{T\in\mathcal{B}}c(T)\leq \sum_{T\in\mathcal{B}}c_{(1+\gamma)k}(T)=c_{(1+\gamma)k}(S)$. \\
    
    Now, fix $T\in\mathcal{B}$ and let $\alpha'=8(\epsilon')^{-1}(\ln(|T|-|P|+1)+\ln(2/\beta'))$. By Lemma~\ref{lem:sv_acc}, with probability at least $1-\beta'$, whenever an instance of Algorithm~\ref{alg:BT} in Algorithm~\ref{alg:non-periodic} returns a position $i$, the distance $\disth(T[i,i+m-1],P)\leq \thresh +\alpha'$; further, any position $i'\leq i$ which was part of that instance satisfies $\disth(T[i',i'+m-1],P)>\thresh-\alpha'$ (otherwise it would have been output instead of~$i$). Thus, for each such $i$ and $\beta'=\beta/((n/m)1152 k)$ we have
    \begin{align*}
        \disth(T[i,i+m-1],P)\leq k+&\epsilon^{-1}16\cdot 1152 k(\ln m + \ln(2(n/m)1152 k/\beta)) +\alpha'\\
        =k+&\epsilon^{-1}16\cdot 1152 k(\ln m + \ln(2(n/m)1152 k/\beta)) \\+8(&\epsilon')^{-1}(\ln(|T|-|P|+1)+\ln(2/\beta'))\\
        =k+&\epsilon^{-1}16\cdot 1152 k(\ln m + \ln(2(n/m)1152 k/\beta)) \\+&\epsilon^{-1}16\cdot 1152 k(\ln m+\ln(2/\beta'))\\
        =k+&2\alpha',
    \end{align*}
    and for each $i'\leq i$ in that instance of Algorithm \ref{alg:BT}
        \begin{align*}
        \disth(T[i',i'+m-1],P)&> k+\epsilon^{-1}16\cdot 1152 k(\ln m + \ln(2(n/m)1152 k/\beta)) -\alpha'\\
        &=k,
    \end{align*}
    with probability $1-\beta'$. Thus, over the entire run of Algorithm \ref{alg:non-periodic}, the inequalities hold with probability at least $1-\beta/(n/m)$, and we condition on that. It directly follows that all counted positions $i$ satisfy $\disth(T[i,i+m-1],P)\leq k+2\alpha' = (1+\gamma) k$, for $\gamma=\epsilon^{-1}32\cdot 1152 (\ln m + \ln(2(n/m)1152k/\beta)=O(\epsilon^{-1}(\ln n + \ln(1/\beta)))$. Thus, $c(T) \leq c_{(1+\gamma)k}(T)$. For the lower bound, there are two cases to consider: \\
    {\bf Case 1}: If $c< 1152k$ when Algorithm \ref{alg:non-periodic} ends, then every possible starting position $i\leq |T|-|P|$ was considered by some instance of Algorithm \ref{alg:BT}. Thus, all positions $i$ satisfying $\disth(T[i,i+m-1],P)\leq k$ were counted and $c_{k}(T)\leq c(T) \leq c_{(1+\gamma)k}(T)$.\\
    {\bf Case 2}: If $c=1152k$, then $c_k(T)\leq 1152 k$ holds by Lemma~\ref{lem:twocases} and since $|T|<2m$.
    \end{proof}

\subsubsection{\texorpdfstring{Non-periodic and small $k$}{Non-periodic and small k}}
Note that there can be a case where neither the conditions of Theorem~\ref{thm:non-periodic} nor Theorem~\ref{thm:periodic} are fulfilled: If $k<C/4=24\epsilon^{-1}\ln(6n/\beta)$, and there exists a primitive string $Q$ of length $|Q|\leq m/(128k)$ such that $\disth(P,Q^{\infty}[0,m-1])\leq 2k$, but there does not exist a primitive string $Q'$ of length $|Q'|\leq m/(32C)$ such that $\disth(P,Q'^{\infty}[0,m-1])\leq 2k$. Note that the second condition implies that there does not exist a primitive string $Q'$ of length $|Q'|\leq m/(128K)$ such that $\disth(P,Q'^{\infty}[0,m-1])\leq 2k< 2K$, for $K=C/4$.
\begin{lemma}\label{lem:smallk}
   Let $P$ be a pattern of length $m$. If $k<K=24\epsilon^{-1}\ln(6n/\beta)$ and there does not exist a string $Q$ of length $|Q|\leq m/(128K)$ such that $\disth(P,Q^{\infty}[0,m-1])\leq 2K$, then there exists an $\epsilon$-differentially private algorithm that given a string $S$ of length $n\geq m$ computes a count $c$, such that with probability $1-\beta$ it holds that $c_k(S)\leq c \leq c_{k+\alpha}(S)$, where $\alpha=O(\epsilon^{-2}(\ln^2(n/\beta)))$.
\end{lemma}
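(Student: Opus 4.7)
The proof will mirror that of Theorem~\ref{thm:non-periodic}, with the role of $k$ in Lemma~\ref{lem:twocases} replaced by $K$. The key observation is that the hypothesis---no primitive $Q$ of length $\leq m/(128K)$ has $\disth(P,Q^\infty[0,m-1])\leq 2K$---lets us invoke Lemma~\ref{lem:twocases} at threshold $K$ rather than $k$: any substring of $S$ of length at most $2m$ contains at most $1152\,K$ positions $i$ with $\disth(T[i,i+m-1],P)\leq K$, and since $k<K$ every $k$-mismatch occurrence is a $K$-mismatch occurrence, so the same bound applies to $k$-mismatch occurrences. This is exactly the structural fact that the inner loop of Algorithm~\ref{alg:non-periodic} needs, but with cap $1152K$ in place of $1152k$.

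Concretely, I would partition $S$ into the same family $\mathcal{B}$ of overlapping blocks of length $\leq 2m-1$ used in Theorem~\ref{thm:non-periodic}, so that every length-$m$ substring of $S$ lies in exactly one block and every position of $S$ lies in at most two blocks. For each $T\in\mathcal{B}$ I would run the obvious variant of Algorithm~\ref{alg:non-periodic} in which the loop cap is $1152K$ (instead of $1152k$), each call to Algorithm~\ref{alg:BT} uses privacy parameter $\epsilon'=\epsilon/(2\cdot 1152K)$, and the threshold is $\thresh=k+\alpha_0$ with $\alpha_0=8(\epsilon')^{-1}(\ln m+\ln(2(n/m)\cdot 1152K/\beta))=O(K\epsilon^{-1}\ln(nK/\beta))=O(\epsilon^{-2}\ln^2(n/\beta))$. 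The final count is the sum of the per-block counts.

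Privacy is immediate from the same ingredients as before: each block consumes at most $1152K$ instances of Algorithm~\ref{alg:BT}, each $\epsilon'$-differentially private, so by Lemma~\ref{lem:sv_privacy} and the composition theorem (Lemma~\ref{fact:composition_theorem}) each per-block computation is $\epsilon/2$-differentially private; since any single string position lies in at most two blocks of $\mathcal{B}$, Lemma~\ref{lem:composing_disjiont} together with Lemma~\ref{fact:group_privacy} promotes this to $\epsilon$-differential privacy overall.

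For accuracy, a union bound over all (at most) $(n/m)\cdot 1152K$ invocations of Algorithm~\ref{alg:BT}, together with Lemma~\ref{lem:sv_acc} applied with failure probability $\beta'=\beta/((n/m)\cdot 1152K)$, gives with probability $1-\beta$: every returned position has Hamming distance at most $\thresh+\alpha_0=k+2\alpha_0$ to $P$, and every position not returned by the instance of Algorithm~\ref{alg:BT} that scanned it has distance strictly greater than $\thresh-\alpha_0=k$. Fix a block $T$ and split into two cases. If the loop exits because Algorithm~\ref{alg:BT} returned $\infty$ before the cap was reached, every position in $T$ was scanned, so $c_k(T)\leq c(T)\leq c_{k+2\alpha_0}(T)$. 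If the loop exits because $c$ reached $1152K$, then all $1152K$ counted positions have distance $\leq k+2\alpha_0$ so $c(T)\leq c_{k+2\alpha_0}(T)$; moreover by Lemma~\ref{lem:twocases} applied at threshold $K$ we have $c_k(T)\leq c_K(T)\leq 1152K=c(T)$. Summing over $T\in\mathcal{B}$ and using that every length-$m$ substring of $S$ lies in exactly one block gives $c_k(S)\leq c\leq c_{k+\alpha}(S)$ for $\alpha=2\alpha_0=O(\epsilon^{-2}\ln^2(n/\beta))$.

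The only step requiring any real thought beyond bookkeeping is justifying the bound $c_k(T)\leq 1152K$ in the ``cap-hit'' case: this is precisely where the hypothesis of the lemma (no small-period approximation at the \emph{larger} threshold $K$) is used, ensuring that Lemma~\ref{lem:twocases}'s non-periodic alternative holds with threshold $K$ and hence bounds $c_K(T)$, which in turn bounds $c_k(T)$ because $k<K$. Everything else is a direct adaptation of the noise-scaling and union-bound calculation from the proof of Theorem~\ref{thm:non-periodic}.
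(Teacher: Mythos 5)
Your proposal is correct, and it rests on the same key insight as the paper's proof: because the hypothesis rules out a small-period approximation at the \emph{larger} threshold $K$, Lemma~\ref{lem:twocases} applies with $K$ in place of $k$, so every length-$\leq 2m$ block contains at most $1152K$ occurrences at distance $\leq K$, hence at most that many at distance $\leq k$. Where you differ is in packaging. The paper invokes Theorem~\ref{thm:non-periodic} entirely as a black box with $K$ substituted for $k$: that yields a count $c$ with $c_K(S)\leq c\leq c_{(1+\gamma)K}(S)$ for $\gamma=O(\epsilon^{-1}\ln(n/\beta))$, and the lemma then follows in two lines from $c_k(S)\leq c_K(S)$ and $(1+\gamma)K\leq k+\eta^2$ with $\eta=\max(1+\gamma,K)=O(\epsilon^{-1}\ln(n/\beta))$. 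You instead re-open the proof of Theorem~\ref{thm:non-periodic} and re-instantiate Algorithm~\ref{alg:non-periodic} with the loop cap and privacy budget scaled by $K$ but the threshold still centered at $k$ (i.e.\ $\thresh=k+\alpha_0$). This costs you a full re-derivation of the privacy and accuracy bookkeeping, but buys a marginally cleaner guarantee: your count is sandwiched as $c_k(S)\leq c\leq c_{k+2\alpha_0}(S)$ directly, without first passing through the looser two-sided relaxation to threshold $K$. Both routes give $\alpha=O(\epsilon^{-2}\ln^2(n/\beta))$, and your handling of the cap-hit case ($c_k(T)\leq c_K(T)\leq 1152K=c(T)$) is exactly where the hypothesis at threshold $K$ is needed, just as in the paper.
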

\begin{proof}
    Note that the conditions of Theorem~\ref{thm:non-periodic} are fulfilled with $K$ taking the role of $k$. Thus there exists an algorithm that outputs a count $c$ such that with probability $1-\beta$ it holds that $c_K(S)\leq c \leq c_{(1+\gamma)K}(S)$ where $\gamma=O(\epsilon^{-1}(\ln(n/\beta)))$. The lemma now follows since $c_k(S)\leq c_K(S)$ and $c_{(1+\gamma)K}(S)\leq c_{\eta^2}(S)\leq c_{k+\eta^2}(S)$ for $\eta=\max(1+\gamma,K)=O(\epsilon^{-1}(\ln(n/\beta)))$.
\end{proof}
Theorem~\ref{thm:main} now follows by noticing that any pattern $P$ fulfills the conditions of either Theorem~\ref{thm:periodic}, Theorem~\ref{thm:non-periodic} or Lemma \ref{lem:smallk}, and that the reporting solution from Theorem~\ref{thm:periodic} implies a counting solution with the same error bounds.

\section{Lower bound}\label{sec:lowerbound}
For any $k\leq m$, there is a trivial algorithm solving the reporting version of the approximate pattern matching problem with additive one-sided error $O(m-k)$ with probability 1 while preserving $\epsilon$-differential privacy: We just output every position $i\in[0,n-m+1]$.
The next Theorem shows that in order to have error $o(m-k)$, we need $m=\Omega(\ln n)$, and in that case the additive error is $\Omega(\ln(n/m))$. Note that the lower bound holds for any pattern $P$ and for the existence or counting variant, as long as at least one witness is returned. Our lower bound is based on a packing argument.\\


\begin{theorem}\label{thm:lowerbound}
    Let $P$ be any string of length $m$ and let $k<m$ be a parameter. Assume there is an $\epsilon$-differentially private algorithm $\alg$ with the following guarantee: If $S$ is a string of length $n\geq m$ such that there exists $j\in[0,n-m]$ with $\disth(S[j,j+m-1],P)\leq k$, then with probability at least $2/3$, $\alg$ returns a position $i\in[0,n-m]$ such that $\disth(S[i,i+m-1],P)\leq k+\alpha$. Then either $\alpha=\Omega(m-k)$, or $m=\Omega(\epsilon^{-1}\ln n)$ and $\alpha=\Omega(\epsilon^{-1}\ln(n/m))$.
\end{theorem}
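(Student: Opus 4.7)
I will use a standard packing/group-privacy argument. If $\alpha \geq m-k$ the conclusion $\alpha = \Omega(m-k)$ holds trivially, so assume $\alpha < m-k$; in particular $k+\alpha+1 \leq m$. Set $N := \lfloor n/(2m) \rfloor$ and take target positions $i_j := 2m(j-1)$ for $j = 1,\ldots,N$. The plan is to construct length-$n$ strings $S_1,\ldots,S_N$, each having a $k$-mismatch occurrence of $P$ at $i_j$, with pairwise Hamming distance at most $\ell := 2(\alpha+1)$ and pairwise-disjoint valid-witness sets $R_j := \{i : \disth(S_j[i,i+m-1],P) \leq k+\alpha\}$. Applying group privacy (Lemma~\ref{fact:group_privacy}) across the family then gives
\[
1 \;\geq\; \sum_{j=1}^N \Pr[\alg(S_1,P)\in R_j] \;\geq\; (N-1)\cdot e^{-\ell\epsilon}\cdot \tfrac{2}{3},
\]
forcing $\ell = 2(\alpha+1) = \Omega(\epsilon^{-1}\ln(n/m))$, i.e.\ $\alpha = \Omega(\epsilon^{-1}\ln(n/m))$.

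For the construction, pick a symbol $c$ outside the alphabet of $P$ (enlarging $\Sigma$ if necessary; the lower bound statement is oblivious to $|\Sigma|$), and fix a set $O \subseteq [0,m-1]$ of $k+\alpha+1$ evenly-spaced ``overwrite'' offsets. Define the background $S_0$ by $S_0[p] := c$ for $p \notin \bigcup_j [i_j, i_j+m-1]$, and inside each overwrite window set $S_0[i_j+t] := c$ if $t \in O$ and $S_0[i_j+t] := P[t]$ otherwise. Define $S_j$ to agree with $S_0$ except that $\alpha+1$ chosen offsets in $O$ inside $[i_j,i_j+m-1]$ are reverted to the corresponding $P$-characters. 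Then $\disth(S_j[i_j,i_j+m-1],P) = (k+\alpha+1)-(\alpha+1) = k$, and since the modifications for different $j$'s sit in the disjoint windows $[i_j,i_j+m-1]$, we have $\disth(S_j, S_{j'}) \leq 2(\alpha+1) = \ell$.

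The key accuracy claim is that $\disth(S_0[i, i+m-1], P) \geq k+\alpha+1$ for every $i \in [0, n-m]$. A window disjoint from all overwrite windows consists entirely of $c$'s and has distance $m$; for a window overlapping a single $[i_{j'}, i_{j'}+m-1]$ with shift $d \in [0,m-1]$, the portion outside the overlap contributes $d$ forced mismatches (being $c$'s with $c \notin P$), while the even spacing of $O$ guarantees at least $(m-d)(k+\alpha+1)/m$ overwrites inside the overlap (each also a mismatch), giving total distance at least $d + (m-d)(k+\alpha+1)/m = (k+\alpha+1) + d\cdot(m-k-\alpha-1)/m \geq k+\alpha+1$. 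Since $S_j$ differs from $S_0$ only inside $[i_j, i_j+m-1]$, any $i \in R_j$ must have $|i-i_j| \leq m-1$, and the $2m$-spacing then yields $R_j \cap R_{j'} = \emptyset$. Combining the group-privacy bound with $\alpha < m-k \leq m$ gives $m = \Omega(\epsilon^{-1}\ln(n/m))$; splitting on $m \leq \sqrt{n}$ (where $\ln(n/m) \geq \tfrac{1}{2}\ln n$) versus $m > \sqrt{n}$ (where $m$ is already polynomially larger than $\epsilon^{-1}\ln n$ for the regime of interest) then yields $m = \Omega(\epsilon^{-1}\ln n)$, completing both conclusions.

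The step I expect to occupy most care is the per-window distance calculation for $S_0$ and $S_j$ at shifts $d \neq 0$, particularly for patterns $P$ with strong internal autocorrelation: the non-overwritten part of the overlap can produce many spurious matches between $P$ and its shifted self, which in principle could cancel out some of the forced mismatches. The argument above circumvents this by lower-bounding the distance using only the $c$-forced mismatches, but the evenly-spaced placement of $O$ and the rounding in ``at least $(m-d)(k+\alpha+1)/m$ overwrites in every overlap'' must be verified by a short pigeonhole argument uniformly in $d$; a small perturbation of $|O|$ or of the $2m$-spacing (both preserving $N=\Omega(n/m)$ and $\ell=O(\alpha)$) absorbs any off-by-one losses and hence leaves the final asymptotic bounds intact.
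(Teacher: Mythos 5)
Your proposal is correct and rests on the same core argument as the paper: a packing family of strings spaced $2m$ apart, each with one planted $k$-mismatch occurrence and all other windows at distance $>k+\alpha$, pairwise Hamming distance $O(\alpha)$, disjoint witness intervals of width $2m-1$, and then group privacy (Lemma~\ref{fact:group_privacy}) summed over the family. The differences are in the details of the construction. For the $\alpha$ bound the paper overwrites a \emph{prefix}: the planted block is $\$^{k}P[k,m-1]$ and every decoy block is $\$^{k+\alpha+1}P[k+\alpha+1,m-1]$, with $\$$ outside all blocks. This makes the shifted-window bound immediate --- a window at shift $d$ into a decoy block picks up $d$ forced mismatches from the all-$\$$ gap plus $\max(0,k+\alpha+1-d)$ from the $\$$-prefix, totalling at least $k+\alpha+1$ --- and entirely avoids the evenly-spaced-set pigeonhole that you correctly identify as the delicate step of your version (and which does need the $|O|$ padding you mention to absorb the rounding loss). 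Your construction works, but the prefix trick buys a cleaner uniform-in-$d$ argument. Second, the paper establishes $m=\Omega(\epsilon^{-1}\ln n)$ via a \emph{separate} packing family (full copies of $P$ in disjoint blocks, pairwise distance $2m$), whereas you derive it from $m>\alpha$ combined with the $\alpha$ bound; both routes are valid and both ultimately rely on the same implicit passage from $\ln(n/m)$ to $\ln n$, which you make more explicit than the paper does. Net: same proof strategy, interchangeable constructions, no gap.
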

\begin{proof}
    First, we assume there is an algorithm $\alg$ as in the statement of the theorem satisfying $\alpha<m-k$. We show $m=\Omega(\ln n)$. 
    We start by dividing $[0,n-1]$ into disjoint intervals of length $m$ (we assume wlog that $n$ is a multiple of $m$). That is, we define the set $\mathcal{I}=\{[jm,(j+1)m-1], j=0,\dots, n/m-1\}$. For every \emph{even}  $j\in\{0,\dots, n/m-1\}$, we define a string $S_j$ as follows: $S_j[jm,(j+1)m-1]=P$, and for all $q\in[0,n-1]\backslash[jm,(j+1)m-1]$, we set $S_j[q]=\$$ for some $\$$ which does not appear in $P$. 
    
    Note that $S_j$ and $S_i$ have a Hamming distance of $2m$ for all even $i\neq j$, $i,j\in [0,n/m-1]$. Further, we have $\disth(S_j[jm,(j+1)m-1],P)=0\leq k$, and for every $q\in[0,n-m]\backslash[(j-1)m+1,(j+1)m-1]$, we have $\disth(S_j[q,q+m-1],P)=m> k+\alpha$. Thus, by assumption on $\alg$, we have 
    \begin{align*}
        \Pr(\alg(S_j)\in [(j-1)m+1,(j+1)m-1])\geq 2/3,
    \end{align*} and, by group privacy (Fact \ref{fact:group_privacy}),
    \begin{align*}
    \Pr(\alg(S_j)\in[(i-1)m+1,(i+1)m-1])\geq e^{-2m\epsilon}2/3,
    \end{align*}
    for every even $i\in [0,n/m-1]$.
    Since these events are disjoint, we have
    \begin{align*}
        1\geq \sum_{\textnormal{even }i\in [0,n/m-1]}e^{-2m\epsilon}2/3
    \end{align*}
    and therefore 
    \begin{align*}
        m\geq (2\epsilon)^{-1}(\ln(n/(2m))+\ln(2/3)),
    \end{align*}
    and therefore $m=\Omega(\epsilon^{-1}\ln n)$.

    Next, we want to show $\alpha=\Omega(\ln(n/m))$. For this, we consider the same partition $\mathcal{I}$ into intervals, and for every even $j$ in $[0,n/m-1]$ we define $S_j$ as follows: $S_j[jm,(j+1)m-1]=\$^k P[k,m-1]$, and for every even $i\neq j$, $i\in [0,n/m-1]$, we define $S_j[im,(i+1)m-1]=\$^{k+\alpha+1}P[k+\alpha+1,m-1]$. For all other positions $q\in[0,n-1]$, we define $S_j[q]=\$$. We have $\disth(S_j[jm,(j+1)m-1],P)\leq k$ and $\disth(S_j[q,q+m-1],P)>k+\alpha$ for all $q\in[0,n-m]\backslash[(j-1)m+1,(j+1)m-1]$. Further, all $S_j$, $S_i$ with $i,j$ even and $i\neq j$ have a Hamming distance of $2\alpha+2$. By assumption on $\alg$ we have
    \begin{align*}
        \Pr(\alg(S_j)\in[(j-1)m+1,(j+1)m-1])\geq 2/3,
    \end{align*} and, by group privacy (Fact \ref{fact:group_privacy}),
    \begin{align*}
    \Pr(\alg(S_j)\in[(i-1)m+1,(i+1)m-1])\geq e^{-(2\alpha+2)\epsilon}2/3.
    \end{align*}
    for every even $i\in [0,n/m-1]$.
    Since these events are disjoint, we have
    \begin{align*}
        1\geq \sum_{\textnormal{even }i\in [0,n/m-1]}e^{-(2\alpha+2)\epsilon}2/3
    \end{align*}
    and therefore 
    \begin{align*}
        \alpha\geq (2\epsilon)^{-1}(\ln(n/2m)+\ln(2/3))-1,
    \end{align*}
    and therefore $\alpha=\Omega(\epsilon^{-1}\ln (n/m))$.
\end{proof}

\section{Conclusion}\label{sec:conclusion}

We have initiated a study of \emph{differentially private pattern matching} algorithms, and have shown that combining techniques from the areas of differential privacy and pattern matching can be used to obtain interesting new results. Specifically, for the approximate pattern matching problem with Hamming distance under $\epsilon$-differential privacy, we have both shown a strong lower bound and new upper bounds. The upper bounds asymptotically match the lower bound for the existence variant, and for the reporting variant for a special class of patterns. 
There are many potential directions for future research, including:
\begin{itemize}
    \item closing the gap between the upper and the lower bound for \emph{all} patterns;
    \item studying $(\epsilon,\delta)$-differential privacy for this problem;
    \item considering other distance measures, e.g. edit distance, both for the definition of $k$-approximate pattern matching, and for the privacy definition;
    \item considering other error measures, e.g. for the counting variant of pattern matching.
\end{itemize}
Further, it would be exciting to see if it is possible to obtain differentially private indexing data structures with useful error guarantees.

\section{Acknowledgements}{This work was supported by a research grant (VIL51463) from VILLUM FONDEN.}

\bibliographystyle{plain}
\bibliography{references}

\appendix
\section{Runtime Analysis}\label{sec:runtime}

In the following, we analyze the runtime of our algorithms and show that it is $O(nm+m^3)$, assuming that noises from the Laplace distribution can be drawn in constant time. We note that in this work we did not optimize for runtime.

First, note that computing the Hamming distance between $S[i,i+m-1]$ and $P$ for any $i$ can be done in $m$ time. We collect some immediate observations about the runtimes of the given algorithms, if we already know whether $P$ fulfills the conditions of the theorems (and for which $|Q|$).
\begin{fact}
    Let $i$ be the output of Algorithm \ref{alg:BT} on an input string $T$ and pattern $P$. The runtime of Algorithm \ref{alg:BT} is $O(\min(i\cdot m, |T|\cdot m))$.
\end{fact}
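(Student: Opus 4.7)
The plan is to do a direct line-by-line accounting of the work inside Algorithm~\ref{alg:BT}. Within one iteration of the for loop, the only nontrivial cost is the computation of $d_i = \disth(S[i,i+m-1], P)$; the two Laplace draws for $\widetilde{\thresh}$ and $\tilde{d_i}$ are constant-time by the assumption we are explicitly working under, and the threshold comparison is a single operation. A straightforward pairwise comparison of the $m$ characters of the current window against $P$ takes $\Theta(m)$ time, so the body of each iteration of the loop costs $O(m)$.

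Next I would bound the number of iterations actually executed, splitting into two cases. In the worst case, the for loop iterates over all of $i \in [0, |T|-m]$ without ever entering the early-termination branch, so it performs at most $|T|-m+1 = O(|T|)$ iterations and incurs total cost $O(|T|\cdot m)$. On the other hand, if the algorithm terminates by outputting some finite index $i$, then the loop has executed exactly $i+1$ iterations at that point, for a cost of $O((i+1)\cdot m) = O(i \cdot m)$ (with the understanding that $i=0$ contributes the base $O(m)$ term).

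Combining the two bounds yields the claim. If the output is a finite $i$, the tighter of the two bounds is $O(i \cdot m)$; if the output is $\infty$, then under the convention $\infty \cdot m = \infty$ the minimum equals $|T|\cdot m$, matching the worst-case bound. In either case the runtime is $O(\min(i \cdot m, |T|\cdot m))$. I don't anticipate any real obstacle here; the only care needed is to be explicit about the constant-time Laplace assumption inherited from the appendix's setup and to handle the $i=\infty$ convention cleanly when taking the minimum.
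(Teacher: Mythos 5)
Your accounting is correct and is exactly the reasoning the paper intends: the paper states this as an immediate observation without proof, and the intended justification is precisely that each loop iteration costs $O(m)$ for the Hamming distance computation and the loop runs for $\min(i+1,|T|-m+1)$ iterations. Your handling of the $i=\infty$ case and the constant-time Laplace assumption is appropriately careful.
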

\begin{corollary}
    The runtime of Algorithm \ref{alg:periodic} on input string $T$ and pattern $P$ is $O(|T|\cdot m)$.
\end{corollary}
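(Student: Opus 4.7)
The plan is to add up the cost of each operation in Algorithm~\ref{alg:periodic} and observe that the two calls to Algorithm~\ref{alg:BT} dominate. Computing the threshold and the privacy parameter $\epsilon'$ takes $O(1)$ time. Constructing $\rev{T}$ and $\rev{P}$ takes $O(|T|+m)=O(|T|)$ time, since $|T|\geq m$. By the preceding Fact, each invocation of Algorithm~\ref{alg:BT} on a string of length $|T|$ and pattern of length $m$ runs in time $O(|T|\cdot m)$, regardless of whether its output is $\infty$ (in which case the loop scans the whole string) or a finite index $i\leq |T|-m$. The translation of $j'$ into $j$ is a single constant-time arithmetic operation.

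The only step that deserves a brief justification is the enumeration of the output set $I=\{i+\ell|Q| : 0\leq \ell \leq \lfloor (j-i)/|Q|\rfloor\}$. Since $|Q|\geq 1$ and $0\leq j-i\leq |T|$, the arithmetic progression has at most $|T|+1$ elements, so writing it out takes $O(|T|)$ time. Summing all contributions yields a total runtime of $O(|T|\cdot m)$, as claimed.

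I do not expect any real obstacle here: the corollary is essentially a bookkeeping statement that follows immediately from applying the preceding Fact twice, together with the trivial bound on the size of the output arithmetic progression. If anything needs care, it is only the observation that Algorithm~\ref{alg:BT} always halts within $O(|T|\cdot m)$ steps (rather than, say, consuming extra time to sample noises or maintain the running threshold), which is already implicit in the statement of the Fact.
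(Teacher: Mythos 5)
Your proof is correct and matches the paper's (implicit) reasoning: the paper states this corollary without proof, treating it as immediate from the preceding Fact, and your bookkeeping — two calls to Algorithm~\ref{alg:BT} at $O(|T|\cdot m)$ each, plus $O(|T|)$ for the reversals and for writing out the arithmetic progression $I$ (which has at most $|T|+1$ terms since $|Q|\geq 1$) — is exactly the intended justification.
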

\begin{corollary}
    The runtime of Algorithm \ref{alg:non-periodic} on input string $T$ and pattern $P$ is $O(|T|\cdot m)$.
\end{corollary}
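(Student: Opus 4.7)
The plan is to observe that Algorithm~\ref{alg:non-periodic} is nothing more than a sequence of calls to Algorithm~\ref{alg:BT} that together scan each position of $T$ at most once, so the corollary reduces immediately to summing the per-call runtime bound provided by the preceding fact.

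Concretely, I would track the successive values $-1 = i_0 < i_1 < \cdots < i_r$ taken by the loop variable $i$. The $\ell$-th iteration invokes Algorithm~\ref{alg:BT} on the suffix $T[i_{\ell-1}+1, n-1]$ and either returns a position $i_\ell > i_{\ell-1}$ (then sets $i \leftarrow i_\ell$, increments $c$, and continues) or returns $\infty$ (in which case $\ell = r$ and the loop terminates). By the fact stated just above, the running time of the $\ell$-th call is $O((i_\ell - i_{\ell-1}) \cdot m)$ when a finite index is returned, and $O((|T| - i_{\ell-1}) \cdot m)$ for the final exhausting call. Summing over all iterations telescopes to $O(|T| \cdot m)$.

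The remaining per-iteration bookkeeping in Algorithm~\ref{alg:non-periodic} (updating $c$ and $i$, evaluating the loop condition, and setting the threshold and privacy parameter once at the beginning) is $O(1)$ per iteration, and the number of iterations is at most $1152k + 1 = O(|T|)$, contributing an additive $O(|T|)$ that is absorbed into the main bound.

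The only substantive point to justify is disjointness of the scanned regions across the calls: this is what makes the telescoping argument work. It follows from the update rule $i \leftarrow j$ at the end of each iteration together with the fact that Algorithm~\ref{alg:BT} iterates $i$ from $0$ upward in its input, so starting the next call from $T[i_\ell + 1, n-1]$ skips exactly the positions already examined. Since this is immediate from inspecting the two algorithms, I expect no real obstacle; the corollary is essentially a direct application of the preceding fact.
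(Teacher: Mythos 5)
Your proof is correct and matches the paper's intent exactly: the corollary is stated without an explicit proof precisely because it follows from the preceding Fact via the telescoping argument you spell out, with the disjointness of the scanned suffix prefixes being the key (and only) point needing verification. Nothing is missing.
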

\begin{corollary}
    Given $P$ and $|Q|$ satisfying the conditions of Theorem \ref{thm:periodic}, the algorithm given by Theorem \ref{thm:periodic} has a runtime of $O(nm)$.
\end{corollary}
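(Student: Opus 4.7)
The plan is to observe that the algorithm from Theorem~\ref{thm:periodic}, once $|Q|$ is given, does no nontrivial preprocessing and simply runs Algorithm~\ref{alg:periodic} once for each string $T$ in the family $\mathcal{F}$ defined in the proof of Theorem~\ref{thm:periodic}, then translates and unions the returned sets. So I only need to combine the per-block runtime with a counting argument on $\mathcal{F}$.

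First, I would recall from the proof of Theorem~\ref{thm:periodic} that $\mathcal{F}$ consists of at most $3n/m$ overlapping substrings of $S$, each of length at most $\lfloor 3m/2 \rfloor - 1 = O(m)$, and that every position of $S$ lies in at most three strings of $\mathcal{F}$. Constructing $\mathcal{F}$ (just a list of interval endpoints) can be done in $O(n/m)$ time; since $|Q|$ is given by assumption, no additional work on $P$ is required.

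Next, I would invoke the preceding corollary stating that Algorithm~\ref{alg:periodic} on an input of length $|T|$ runs in time $O(|T| \cdot m)$. Summing over $\mathcal{F}$ and using that each position of $S$ appears in at most three blocks, $\sum_{T \in \mathcal{F}} |T| \leq 3n$, so the total time spent inside Algorithm~\ref{alg:periodic} across all blocks is
\[
\sum_{T \in \mathcal{F}} O(|T|\cdot m) \;=\; O\!\Bigl(m \cdot \sum_{T \in \mathcal{F}} |T|\Bigr) \;=\; O(nm).
\]

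Finally, I would argue that the remaining bookkeeping is negligible: for each block the algorithm outputs a (possibly empty) arithmetic progression $\{i + \ell|Q|\}$ of length at most $|T|/|Q| = O(m)$, and translating indices to their offsets in $S$ and writing them takes time proportional to the number of outputs. Since the outputs across blocks are disjoint subsets of $[0, n-m]$, their total size is $O(n)$, hence writing the union costs $O(n)$, which is absorbed into $O(nm)$.

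There is no real obstacle here; this is a routine aggregation. The only point to double-check is that the per-block output bound $|T|/|Q|$ does not hide a $\Theta(n)$ blow-up, but since each position of $S$ is output in at most a constant number of progressions, the global output size stays $O(n)$, and the bound $O(nm)$ holds.
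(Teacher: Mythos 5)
Your proposal is correct and matches the paper's (implicit) argument: the corollary follows by combining the $O(|T|\cdot m)$ bound for Algorithm~\ref{alg:periodic} with the facts that $|\mathcal{F}|\leq 3n/m$ and each $T\in\mathcal{F}$ has length $O(m)$ (equivalently, $\sum_{T\in\mathcal{F}}|T|=O(n)$). The paper states this without proof, and your aggregation, including the observation that the output bookkeeping is absorbed into $O(nm)$, is exactly the intended reasoning.
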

\begin{corollary}
    The algorithms of Theorem \ref{thm:non-periodic} and Lemma \ref{lem:smallk} have a runtime of $O(nm)$.
\end{corollary}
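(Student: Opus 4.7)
The plan is to reduce both bounds to the previously established corollary that Algorithm~\ref{alg:non-periodic} runs in time $O(|T|\cdot m)$ on any input string $T$.

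For Theorem~\ref{thm:non-periodic}, I would begin by recalling the algorithm's structure from the proof: it partitions $S$ into the collection $\mathcal{B}$ of overlapping blocks of length at most $2m-1$, runs Algorithm~\ref{alg:non-periodic} on each block, and aggregates the counts. Constructing $\mathcal{B}$ takes $O(n)$ time. Since $|\mathcal{B}|\leq \lceil n/m \rceil$ and each block has length at most $2m-1$, the corollary above yields a per-block cost of $O(m^2)$, giving total cost $O((n/m)\cdot m^2)=O(nm)$. Equivalently, since every position of $S$ lies in at most two blocks, $\sum_{T\in\mathcal{B}}|T|=O(n)$, and the bound follows directly from $\sum_{T\in\mathcal{B}}O(|T|m)=O(nm)$. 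The aggregation step (summing the returned counts and picking any witness) is negligible, contributing only $O(|\mathcal{B}|)=O(n/m)$.

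For Lemma~\ref{lem:smallk}, I would observe that its algorithm is simply an invocation of the algorithm from Theorem~\ref{thm:non-periodic} with the parameter $K=24\epsilon^{-1}\ln(6n/\beta)$ playing the role of $k$. The runtime analysis above is independent of the value of $k$ (the parameter only enters through the privacy scaling and the threshold used inside Algorithm~\ref{alg:BT}, which does not affect the $O(|T|\cdot m)$ per-call bound). Hence the same $O(nm)$ bound carries over.

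The only potentially subtle point is verifying that the runtime of Algorithm~\ref{alg:non-periodic} really is $O(|T|\cdot m)$ uniformly in the choice of threshold and privacy parameter; this follows from the preceding fact about Algorithm~\ref{alg:BT}, since each call scans at most $|T|$ positions and spends $O(m)$ per position computing a Hamming distance, and the outer \textbf{while} loop of Algorithm~\ref{alg:non-periodic} advances the index monotonically so the total work across all calls on a single block is still $O(|T|\cdot m)$. I do not expect any genuine obstacle here; the statement is essentially a routine accounting exercise built on top of the earlier corollaries.
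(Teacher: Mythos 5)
Your argument is correct and matches the paper's intended (and essentially unstated) reasoning: both bounds follow from the preceding corollary that Algorithm~\ref{alg:non-periodic} runs in $O(|T|\cdot m)$ time per block, combined with the fact that the blocks in $\mathcal{B}$ have total length $O(n)$, and the observation that Lemma~\ref{lem:smallk} simply reuses the Theorem~\ref{thm:non-periodic} algorithm with $K$ in place of $k$. No issues.
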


Next, we analyze the ``preprocessing" part for $P$, i.e. we show how to decide if $P$ is close to a periodic string $Q^{\infty}$ with small $|Q|$. 

\begin{lemma}\label{lem:runtime}
    Let $P$ be a pattern of length $m$ and let $k$ be a parameter. In $O(m^3)$ time, we can decide if there exists a $Q$ such that $|Q|\leq \max(\frac{m}{32C},\frac{m}{128k})$ fulfilling $\disth(P,Q^{\infty}[0,m-1])\leq 2k$, where $C$ is defined as in Theorem~\ref{thm:periodic}, and compute the shortest such. 
\end{lemma}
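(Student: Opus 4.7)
The plan is to enumerate candidate period lengths $\pi = 1, 2, \ldots, \pi_{\max}$ where $\pi_{\max} := \lfloor\max(m/(32C), m/(128k))\rfloor$, and for each such $\pi$ compute, in closed form, the (distance-minimizing) optimal $Q$ of length $\pi$ and test whether $\disth(P, Q^{\infty}[0,m-1]) \leq 2k$. We return the $Q$ of smallest length that passes the test, or report ``no'' if none does. Crucially, we only ever have to try one specific $Q$ per length $\pi$, namely the optimal one.

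The structural observation that makes this work is that the Hamming distance decomposes additively across residue classes modulo $\pi$:
\[
\disth(P, Q^{\infty}[0,m-1]) \;=\; \sum_{j=0}^{\pi-1} \bigl|\{i \geq 0 : j+i\pi \leq m-1,\; P[j+i\pi] \neq Q[j]\}\bigr|,
\]
since every index $q \in [0,m-1]$ has a unique representation $q = j + i\pi$ with $j \in [0,\pi-1]$, and $Q^{\infty}[q] = Q[j]$. Each summand depends only on the single letter $Q[j]$, so the overall minimum is attained by choosing, for each residue $j$ independently, $Q[j]$ to be a most frequent symbol in the multiset $M_j := \{P[j+i\pi] : 0 \leq i \leq \lfloor (m-1-j)/\pi\rfloor\}$. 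The resulting contribution of class $j$ to the distance is $|M_j|$ minus the mode count of $M_j$, and summing these over $j$ gives the minimum achievable distance for this $\pi$.

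For the runtime: given $\pi$, we bucket the positions of $P$ by residue class (total size $m$) and compute the mode of each bucket, e.g.\ by sorting each bucket in $O((m/\pi)\log(m/\pi))$ time, or via a hash map. This takes $O(m\log m)$ per $\pi$, and there are at most $m$ values of $\pi$ to try, yielding $O(m^2\log m) = O(m^3)$ total. Finally, the shortest $Q$ returned is automatically primitive: if $Q = R^s$ with $s \geq 2$, then $Q^{\infty} = R^{\infty}$, so $R$ (of length $|Q|/s < |Q|$) would also satisfy the condition, contradicting minimality. There is no real obstacle here, as the entire argument relies on the clean decomposition of the Hamming distance across residue classes, which makes the inner optimization separable and solvable in closed form.
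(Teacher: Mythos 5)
Your proposal is correct, and it takes a genuinely different route from the paper. The paper enumerates the same candidate lengths $q$, but then uses a combinatorial uniqueness argument: it partitions $P$ into $m/q\geq 32k$ aligned blocks of length $q$ and observes that if $\disth(P,Q^{\infty}[0,m-1])\leq 2k$ then at most $2k$ blocks can contain a mismatch, so $Q$ must equal a strict majority of the blocks; hence there is at most one candidate per length, which is found by counting block frequencies (e.g.\ with a trie) in $O(m)$ time and then verified directly. You instead decompose the Hamming distance additively over residue classes modulo $\pi$ and take the mode symbol in each class, which computes the \emph{exact minimum} of $\disth(P,Q^{\infty}[0,m-1])$ over all $Q$ of length $\pi$. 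Your argument is more general --- it needs no lower bound on the number of blocks per period and would work for any $\pi\leq m$ and any threshold, whereas the paper's uniqueness claim relies on $2k$ being less than half of $m/q$ --- and it yields the optimal $Q$ rather than merely a feasible one; the paper's method is marginally simpler to implement given that bound, and both fit comfortably in the $O(m^3)$ budget (indeed both are really $O(m^2)$ up to logarithmic factors). Your explicit remark that the shortest such $Q$ is automatically primitive is a worthwhile addition, since primitivity is what Theorem~\ref{thm:periodic} actually requires and the paper's proof leaves this implicit.
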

\begin{proof}
    For any potential $q\leq \max(\frac{m}{32C},\frac{m}{128k})$, we do the following: First, we conceptually partition the pattern $P$ into blocks of length $q$. Note that there are at least $m/q\geq \min(128 k,32C)\geq 32k$ such blocks. Now assume there exists $Q$ of length $|Q|=q$ satisfying $\disth(P,Q^{\infty}[0,m-1])\leq 2k$. Then, since $\disth(P,Q^{\infty}[0,m-1])\leq 2k$, \emph{all but at most $2k$ blocks} of $P$ have to be equal to $Q$. Note that there can be at most one potential string of length $q$ fulfilling that condition. To find it, we traverse $P$ and count how often a block in $P$ is equal to any given substring of length $q$. We can do this by e.g. building a trie of all blocks as we traverse $P$. This takes $O(m)$ time. Now, if we found a candidate string $Q$ such that all but at most $2k$ blocks are equal to $Q$, we spend at most $m$ time to check if indeed $\disth(P,Q^{\infty}[0,m-1])\leq 2k$. Since there are at most $\max(\frac{m}{32C},\frac{m}{128k})\leq m$ possible values of $q$, the total runtime is $O(m^3)$.
\end{proof}
Note that the condition of Lemma~\ref{lem:smallk} can be checked by applying Lemma~\ref{lem:runtime} with $C/4$ taking the role of $k$.
\end{document}